\documentclass[11pt]{article}
\usepackage[T1]{fontenc}
\usepackage[utf8]{inputenc}
\usepackage{lmodern}
\usepackage[margin=1in]{geometry}
\usepackage{amsmath,amssymb,amsthm}
\usepackage{enumitem}
\usepackage{microtype}
\usepackage[hidelinks]{hyperref}
\usepackage[nameinlink,noabbrev]{cleveref}

\newtheorem{theorem}{Theorem}[section]
\newtheorem{lemma}[theorem]{Lemma}
\newtheorem{hypothesis}[theorem]{Hypothesis}

\theoremstyle{definition}
\newtheorem{definition}[theorem]{Definition}
\theoremstyle{remark}

\crefname{hypothesis}{Hypothesis}{Hypotheses}
\Crefname{hypothesis}{Hypothesis}{Hypotheses}
\crefname{lemma}{Lemma}{Lemmas}
\Crefname{lemma}{Lemma}{Lemmas}

\title{Near-Logarithmic Inapproximability of Parameterized Set Cover}
\author{
  Bingkai Lin%
  \thanks{
  Nanjing University, Nanjing, China.
  Email: \texttt{lin@nju.edu.cn}}
  \and
  Xin Zheng%
  \thanks{
  Nanjing University, Nanjing, China.
  Email: \texttt{xinzheng@smail.nju.edu.cn}}
}
\date{}

\begin{document}
\maketitle

\begin{abstract}
We study the approximability of \textnormal{\textsc{Set Cover}} parameterized by the
target cover size $k$. Let $n$ be the universe size, $m$ the number of
available sets, and $|\Gamma|$ the explicit input length. We prove that,
for some absolute constant $c>0$, distinguishing
\[
\operatorname{opt}(\Gamma)\le k
\quad\text{from}\quad
\operatorname{opt}(\Gamma)>k\cdot\frac{c\log n}{k^2\log\log n}
\]
is $\mathsf{W[1]}$-hard. Assuming the Exponential Time Hypothesis,
there is also an absolute constant $\varepsilon>0$ for which no
deterministic algorithm solves this gap problem in time
$f(k)|\Gamma|^{\varepsilon k}$, for any computable function $f$.
For every fixed $\alpha>0$, both hardness results hold even when
$n=O((\log m)^{1+\alpha})$, with constants allowed to depend on
$\alpha$. For fixed $k$, the gap is within an $O_k(\log\log n)$
factor of the greedy algorithm's guarantee.

Under the Strong Exponential Time Hypothesis, we further rule out
$o(\log n/\log\log n)$ approximation in time $O(|\Gamma|^{k-\delta})$
for every fixed $k\ge 2$ and $\delta>0$. Thus a near-logarithmic
hardness factor persists even when the exponent is reduced from
exhaustive search by only a fixed constant. The constant in this
SETH hardness factor may depend on $k$ and $\delta$.
\end{abstract}

\section{Introduction}\label{sec:intro}

\textnormal{\textsc{Set Cover}} is a central problem in combinatorial
optimization. Given a universe $U$ and a family $\mathcal S$ of subsets
of $U$, the goal is to find a smallest subfamily whose union is $U$.
Write $n:=|U|$ and $m:=|\mathcal S|$, and let $|\Gamma|$ denote the
explicit input length of the instance $\Gamma=(\mathcal S,U)$.

The polynomial-time approximability of \textnormal{\textsc{Set Cover}}
is well understood. The greedy algorithm achieves approximation ratio
$H_n\le 1+\ln n$~\cite{Joh74,Lov75,Chvatal79}, and its worst-case ratio is
$\ln n-\ln\ln n+\Theta(1)$~\cite{Sla97}.
On the hardness side, Lund and Yannakakis~\cite{LY94} established a
logarithmic lower bound under a quasipolynomial-time hardness assumption
for NP, and Feige~\cite{Feige98} sharpened the leading constant to one under
the assumption that NP has no $N^{O(\log\log N)}$-time algorithms.
The PCP results of Raz and Safra~\cite{RS97} and Arora and
Sudan~\cite{AS03} yielded NP-hardness for a $c\ln n$ factor for
some absolute constant $c>0$, with further improvements by Alon,
Moshkovitz, and Safra~\cite{AMS06}.
Finally, the reduction of
Moshkovitz~\cite{DBLP:journals/toc/Moshkovitz15}, combined with the
parallel repetition theorem of Dinur and Steurer~\cite{DS14}, established
NP-hardness of approximation within $(1-\varepsilon)\ln n$ for every
fixed $\varepsilon>0$.

When the desired cover is small, the target size $k$ offers another
algorithmic resource. A fixed-parameter tractable (FPT) algorithm may
run in time $f(k)|\Gamma|^{O(1)}$, for an arbitrary computable function
$f$. For a factor $\rho\ge 1$, the corresponding gap problem asks us
to distinguish $\operatorname{opt}(\Gamma)\le k$ from
$\operatorname{opt}(\Gamma)>\rho k$.
Can FPT algorithms achieve an $o(\log n)$ approximation, improving
on the classical logarithmic guarantee? More generally, is such an
approximation possible in $f(k)|\Gamma|^{o(k)}$ time?

Chen and Lin~\cite{ChenLin16} established constant-factor FPT
inapproximability for \textnormal{\textsc{Dominating Set}}, and hence for
\textnormal{\textsc{Set Cover}}, under $\mathsf{W[1]}\ne\mathsf{FPT}$.
Chalermsook et al.~\cite{DBLP:journals/siamcomp/ChalermsookCKLM20}
ruled out every approximation factor depending only on $k$ under Gap-ETH.
Karthik, Laekhanukit, and Manurangsi~\cite{DBLP:journals/jacm/SLM19}
obtained this conclusion under $\mathsf{W[1]}\ne\mathsf{FPT}$ and proved
$f(k)|\Gamma|^{o(k)}$-time inapproximability under ETH.
Their framework also gives approximation lower bounds growing with
the universe size, of the form $(\log n)^{1/\operatorname{poly}(k)}$.
Lin's threshold-graph reduction~\cite{DBLP:conf/icalp/Lin19}
improved the quantitative bound to a gap of the form
\[
  \operatorname{opt}(\Gamma)\le k
  \qquad\text{versus}\qquad
  \operatorname{opt}(\Gamma)>
  c_0\left(\frac{\log n}{\log\log n}\right)^{1/k},
\]
with ETH lower bounds whose running-time exponent is linear in $k$,
as well as stronger running-time lower bounds under SETH.
Here the approximation factor is
$\frac{c_0}{k}(\log n/\log\log n)^{1/k}$.
\footnote{We express the earlier bounds in terms of the universe size.
Lin's original statements use the number of vertices in the incidence
graph; the displayed universe-size bound follows as well.
For FPT inapproximability,
Lin~\cite[Theorem 4]{DBLP:conf/icalp/Lin19} also obtains
exponents $1/\rho(k)$ for arbitrarily slowly growing unbounded
computable functions $\rho$. The $1/k$ exponent above describes the
direct gap construction and its quantitative ETH bound.}
Karthik and Livni-Navon~\cite{DBLP:conf/sosa/SN21} subsequently gave a
general construction of threshold graphs from error-correcting codes
and identified a limitation of the existing composition that keeps the
gap at a parameter-dependent root of the logarithm.

The known inapproximability ratios, such as
$(\log n)^{1/\operatorname{poly}(k)}$, remain far below the
$O(\log n)$ approximation ratio achieved by the greedy algorithm. 
This gap raises the central question of this work: can near-logarithmic
inapproximability be established against arbitrary FPT algorithms, and can such hardness hold even for
$f(k)|\Gamma|^{o(k)}$-time algorithms under ETH?

Lin, Ren, Sun, and Wang~\cite{DBLP:conf/soda/LinRSW23} made progress
in two further directions. They proved that approximation within any
fixed constant is already $\mathsf{W[2]}$-hard, establishing
constant-factor FPT inapproximability under the weaker assumption
$\mathsf{W[2]}\ne\mathsf{FPT}$. They also ruled out
polynomial-time $o(\log n/\log\log n)$ approximation under
$\mathsf{W[1]}\ne\mathsf{FPT}$, even when the target cover size is
$k=O((\log n/\log\log n)^3)$.
\footnote{The latter result can be stated with $n=|U|$ by duplicating
universe elements, if necessary, so that the universe has a fixed
polynomial size in the source instance. This preserves the optimum
and the stated polylogarithmic bound on $k$.}
This established a near-logarithmic lower bound for small covers.
The remaining issue is the running time.

The goal of determining this tradeoff is also emphasized
in the survey of Feldmann, Karthik, Lee, and Manurangsi~\cite{FKLM20}.
Their Open Question 4 asks whether
$k$-\textnormal{\textsc{Dominating Set}} admits a
$(\log n)^{1-o(1)}$ approximation in $n^{k-0.1}$ time,
where $n$ denotes the number of vertices of the input graph.

\subsection{Our results}

We establish near-logarithmic inapproximability for parameterized
\textnormal{\textsc{Set Cover}} under both
$\mathsf{W[1]}\ne\mathsf{FPT}$ and ETH.
The logarithm of the universe size has exponent one, and the parameter
contributes only a polynomial loss outside the logarithm.
Specifically, for $n\ge 16$, let
\[
  \gamma_k(n):=\frac{c\log n}{k^2\log\log n}.
\]
All logarithms below are base two unless written as $\ln$.
We state the results on the nontrivial range $\gamma_k(n)\ge 1$;
the reductions produce instances in this range.

\begin{theorem}[W{[}1{]}-hardness]\label{thm:w1}
For every fixed $\alpha>0$, there is a constant $c=c(\alpha)>0$
such that it is $\mathsf{W[1]}$-hard, parameterized by $k$, to
distinguish
\[
  \operatorname{opt}(\Gamma)\le k
  \qquad\text{from}\qquad
  \operatorname{opt}(\Gamma)>k\gamma_k(n).
\]
This holds even for instances with $m$ indexed sets and a universe
of size $n=O((\log m)^{1+\alpha})$.
\end{theorem}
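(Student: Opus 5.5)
We reduce from a W[1]-hard problem with strong structure: $k$-\textsc{Clique}, or better a multicolored 2-CSP (equivalently \textsc{Multicolored Clique}) on $k'=\Theta(k^2)$ constraints with alphabet size $N$, whose parameter is a function of $k$. We then apply a Lund--Yannakakis style composition with a combinatorial gadget of logarithmic quality. The plan has three stages.

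\textbf{Stage 1: a parameterized label-cover instance.} We first turn the clique instance into a bipartite label cover (projection game). The left side has $k$ variables, one per color class, with alphabet $[N]$. The right side has the $\binom{k}{2}$ edge-constraints. In the yes case every constraint is satisfiable by a consistent labeling. In the no case we need a \emph{soundness for multi-labelings} property: if each left variable receives a small list of labels, of total size at most $k\gamma$, then some constraint is still not covered consistently. With only $k$ variables this cannot be amplified by PCP techniques. Instead we rely on the ``no list-satisfiability'' guarantee of clique itself: any list assignment with lists of size $\ell$ yields at most $\ell^{k}$ candidate cliques, and none is a clique. This must be converted into a per-constraint failure, and the conversion is where the $k^2$ loss arises. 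We sum over the $\binom{k}{2}$ pairs and use averaging to find one pair $(i,j)$ whose lists contain no edge. This follows from a Turán-type argument on the list graph only if we first pass to a variant in which a ``list clique'' is forced, for instance by using a hypothesis-free colored-clique source where edges inside lists must be totally absent. If that averaging fails, I would instead use the threshold-graph/Lin-style reduction to obtain a 2-CSP with the ``uniquely covered pair'' property directly.

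\textbf{Stage 2: the gadget.} For each constraint $(i,j)$ we attach a universe block $U_{ij}$ carrying a set system with the following property. Sets $A_a$ for labels $a\in[N]$ of side $i$ and sets $B_b$ for labels of side $j$ satisfy $A_a\cup B_b=U_{ij}$ whenever $(a,b)$ is an edge. Any collection of fewer than $\gamma$ sets containing no edge pair fails to cover $U_{ij}$. The standard choice is a randomized or derandomized $(N,\gamma)$-universal family, i.e.\ a splitter or an anti-concentrated partition system, which needs $|U_{ij}|\approx 2^{\gamma}\mathrm{poly}(\gamma,\log N)$. Because we need edges specifically rather than equal labels, we index by a good code: each vertex $v$ gets a codeword over an alphabet of size $q$, and the gadget uses Feige's partition-system per code coordinate. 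Each set of side $i$ then contributes the complement of its partition cell. We choose $\gamma$ so that $2^{\gamma}\cdot\mathrm{poly}\log N\le(\log N)^{1+\alpha}$. This gives $\gamma\approx\alpha\log\log N$, which is too small.

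We therefore need the sharper regime in which $n=(\log m)^{1+\alpha}$ and the gap is $\log n/\log\log n$. We replace partition systems by \emph{hitting/anti-covering set systems of size $\log n/\log\log n$}: universe $[n]$, with $m$ subsets such that no $\gamma$ of them, lacking a complementary pair, cover. A random construction with density $1-1/\gamma$ works when $m\le\exp(n^{1/(1+\alpha)})$ and $\gamma=c\log n/\log\log n$. We derandomize it with $\epsilon$-biased or $\gamma$-wise independent spaces in time $\mathrm{poly}(m)$.

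\textbf{Stage 3: combining and soundness.} The final universe is $\bigcup_{ij}U_{ij}$ with $n=k^2|U_{ij}|$. Vertex $v$ of color $i$ contributes one set that is the union over $j$ of its gadget pieces. Completeness gives a cover of size $k$. For soundness, suppose a cover has size $\le k\gamma_k(n)$. Then each color class has on average at most $\gamma_k$ chosen vertices. Marking the heavy classes and applying Markov's inequality leaves a pair $(i,j)$ with both lists light and containing no edge. The gadget then fails to cover $U_{ij}$. The $k^2$ in $\gamma_k$ absorbs both the Markov loss and the factor $k^2$ in $n$.

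The main obstacle is Stage 2's explicit construction, which must achieve gap $\Theta(\log n/\log\log n)$ with $n=(\log m)^{1+\alpha}$ deterministically in FPT time. Stage 1's pair-finding step is also delicate. The first thing I would try for Stage 1 is a reduction from multicolored clique that enforces that every pair of list-sets has no edge. One route is the Lin threshold-graph framework, replacing its $(\cdot)^{1/k}$ composition with a single-level gadget.
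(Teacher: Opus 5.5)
Your proposal has a genuine gap, and it is structural: the list-soundness you need in Stage 1, and use in the Markov step of Stage 3, is false for any instance built from the $\binom{k}{2}$ pairwise edge constraints of a clique instance. Take any NO instance in which every pair of color classes spans at least one edge. These are exactly the NO instances not decidable by inspection, so your reduction must handle them. For each pair $i<j$, pick one edge $\{a_{ij},b_{ij}\}$ and put $a_{ij}$ into $L_i$ and $b_{ij}$ into $L_j$. These lists have total size at most $k(k-1)$ and cover every constraint. Correspondingly, in your Stage 3 instance these at most $k(k-1)$ vertex sets cover every block $U_{ij}$, because your completeness requirement gives $A_{a_{ij}}\cup B_{b_{ij}}=U_{ij}$. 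So the NO-case optimum is at most $k(k-1)$. The theorem, however, needs it to exceed $k\gamma_k(n)=c\log n/(k\log\log n)$, which tends to infinity for fixed $k$.

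No gadget in Stage 2 can repair this. The step ``Markov leaves a light pair whose lists contain no edge'' has no justification. A NO instance only says that no tuple in $L_1\times\cdots\times L_k$ is a clique, which is a global $k$-ary property and does not localize to a single pair. Neither of your fallbacks supplies it. A Turán-type argument cannot, and Lin's threshold-graph composition is exactly the route whose parameter-dependent-root barrier Karthik and Livni-Navon identified.

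The missing idea is to make each test $k$-ary, so that a single test rejects all tuples generated by the lists at once. The paper starts from grouped binary Vector Sum with $D=O(k\log N)$. For lists of total size at most $h$, the at most $A=\lceil (h/k)^k\rceil$ sums $\sum_i x_i$ are all nonzero. A two-stage linear separator, built by composing polynomial-evaluation maps over binary extension fields, gives $O(DA^2\log(2DA))$ linear maps $\psi$ into an alphabet of size $O(A\log(2DA))$. One of these maps keeps all those sums nonzero, so the test with predicate $\sum_i\psi(x_i)=\mathbf 0$ rejects every list tuple.

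This also resolves your Stage 2 difficulty. Partition systems or random covering families must union-bound over collections of the huge label set, which is why you only obtained $\Theta(\log\log N)$. Your random density-$(1-1/\gamma)$ family also cannot satisfy the structured requirement $A_a\cup B_b=U_{ij}$ for all edges. The paper instead uses the $h$-local acceptance family. Its universe is the set of rejecting message collections of size at most $h$, and $P_e$ consists of the collections avoiding $e$. Its size is at most $(k|\Sigma|+1)^h$, which depends only on the small alphabet and on $h$, not on $N$. Choosing $h\approx \log(B/\log N)/(64k\log\log B)$ keeps the universe below $B=(\log N)^{1+a}$. The list gap $h$ then yields the factor $h/k=\Omega_\alpha(\log n/(k^2\log\log n))$.
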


\begin{theorem}[ETH lower bound]\label{thm:eth}
Assume ETH. For every fixed $\alpha>0$, there are constants
$c=c(\alpha)>0$ and $\varepsilon=\varepsilon(\alpha)>0$ such that,
for every computable function $f$, no deterministic algorithm
distinguishes
\[
  \operatorname{opt}(\Gamma)\le k
  \qquad\text{from}\qquad
  \operatorname{opt}(\Gamma)>k\gamma_k(n)
\]
in time $f(k)|\Gamma|^{\varepsilon k}$, even when
$n=O((\log m)^{1+\alpha})$.
\end{theorem}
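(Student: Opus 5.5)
We derive \cref{thm:eth} from the same gap-producing reduction that yields \cref{thm:w1}, and track its parameters quantitatively. The source problem is a $k$-variable constraint satisfaction problem, equivalently $k$-\textsc{Multicolored Clique} with $k'=\Theta(k^2)$ edge-checking constraints. Under ETH it has no $f(k)N^{o(k)}$-time algorithm; this is the Chen--Huang--Kanj--Xia lower bound. To get an explicit constant, we pass to a version with an \emph{absolute} constant $\varepsilon_0$: under ETH, there is no $f(k)N^{\varepsilon_0 k}$ algorithm. The route is sparsification of 3-SAT, followed by splitting the variables into $k$ blocks of size $n_0/k$, which gives $N=2^{O(n_0/k)}$. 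Then an $N^{\varepsilon_0 k}=2^{O(\varepsilon_0 n_0)}$ algorithm would refute ETH once $\varepsilon_0$ is small enough.

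The reduction from \cref{thm:w1} maps such an instance, in time $f'(k)N^{O(1)}$, to a \textsc{Set Cover} instance $\Gamma$ with the following properties.
\begin{itemize}[nosep]
\item It has target $k$ and $m\le N^{C}$ sets, where $C=C(\alpha)$ is independent of $k$; in the forward direction, a solution gives one set per variable.
\item Its universe has size $n=O((\log m)^{1+\alpha})$.
\item Its soundness gives $\operatorname{opt}(\Gamma)>k\gamma_k(n)$ whenever the source is unsatisfiable.
\end{itemize}
The key quantitative requirement is that the instance size satisfies $|\Gamma|\le f''(k)N^{C'}$ with $C'$ independent of $k$. This should hold because each of the $k$ variable-groups contributes roughly $N$ sets and the universe is only polylogarithmic in $N$. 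The threshold-graph or code-based gadget should therefore use alphabet and length bounded in terms of $\log N$ and $k$, but not $N^{k}$. Where the construction uses randomness (for example, a random code or a hash family achieving the $\log n/\log\log n$ gap), we replace it by an explicit deterministic construction, or by brute force over $\operatorname{poly}(\log N)$-sized objects. This keeps the reduction deterministic, which matters because the theorem excludes only deterministic algorithms.

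With this in hand, suppose a deterministic algorithm $A$ solves the gap problem in time $f(k)|\Gamma|^{\varepsilon k}$. Composing the reduction with $A$ solves the source problem in time
\[
f'(k)N^{O(1)}+f(k)\bigl(f''(k)N^{C'}\bigr)^{\varepsilon k}=g(k)N^{C'\varepsilon k+O(1)}.
\]
We choose $\varepsilon=\varepsilon_0/(2C')$. For $k$ above a constant threshold this running time is below $N^{\varepsilon_0 k}$, which contradicts ETH. Small $k$ are absorbed into $g$: we can pad the source, or restrict attention to large $k$, since the ETH lower bound holds for all sufficiently large $k$. The constant $c$ is the one given by \cref{thm:w1}. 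We must also check that the produced instances lie in the regime $\gamma_k(n)\ge1$, which holds when $N$ is large relative to $k$; instances with small $N$ are solved by brute force within $g(k)$.

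The main obstacle is the size bound $|\Gamma|\le f''(k)N^{C'}$ with $C'$ an absolute constant. A naive gap amplification, such as a product over the $\Theta(k^2)$ constraints or hypercube-style composition as in earlier threshold-graph reductions, can raise the set count to $N^{\Theta(k)}$, and that would destroy the linear-in-$k$ exponent. The first thing to try is to verify, in the construction behind \cref{thm:w1}, that sets are indexed by (variable, value, short gadget label) with labels of size $2^{O(\log N/\log\log N)}$ or $\operatorname{polylog} N$. The $k^2$ loss in $\gamma_k$ should then be paid inside the gap rather than in the instance size.
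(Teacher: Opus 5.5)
Your closing step matches the paper's: compose a reduction whose running time and output size are $N^{C}$, with $C$ independent of $k$, with the hypothetical $f(k)|\Gamma|^{\varepsilon k}$ algorithm; take $\varepsilon$ a small multiple of the source exponent divided by $C$; restrict to large $k$; and brute-force small $N$. The gap is in the source you feed into it. You start from $k$-Multicolored Clique (a binary CSP on $k$ variables) and assert that the reduction behind Theorem~\ref{thm:w1} outputs target $k$ with one set per variable. It does not. That reduction starts from grouped binary Vector Sum, and the Set Cover target equals the number of Vector Sum groups. Going from Clique to Vector Sum (Lemma~\ref{lem:vs-w1}) needs one group per pair of colour classes, because an arbitrary edge relation can be checked linearly only by a group that selects an edge. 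So the target becomes $K=r+\binom r2=\Theta(r^2)$. An $f(K)|\Gamma|^{\varepsilon K}$ algorithm then gives only an $N^{O(\varepsilon r^2)}$ algorithm for $r$-Clique, which does not contradict an $N^{\varepsilon_0 r}$ lower bound; at best you get an exponent of order $\sqrt k$.

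The missing ingredient is Lemma~\ref{lem:vs-eth}: an ETH bound $f(k)N^{\eta k}$ for grouped binary Vector Sum whose exponent is linear in the number of groups, with $D\le C_0k\log N$. The paper gets it by carrying your sparsify-and-split idea directly into Vector Sum, with one coordinate per literal occurrence. It uses $r$ variable-block groups, each vector writing a block assignment at that block's occurrences. It uses $r$ clause-block groups, each vector choosing a satisfying triple for every clause in the block, giving $7^{M/r}$ vectors. Consistency is then a coordinatewise XOR with only $2r$ groups. Blocking constraints this way works only because each clause relation has constant size. It fails for the $\Theta(r^2)$ arbitrary binary relations of your CSP, since a block of them would have $N^{\Theta(k)}$ choices.

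Second, the size bound you flag as the main obstacle is asserted ("should hold") rather than proved. The mechanisms you suggest (derandomizing random codes or hash families, gadget labels in set indices) are not what makes it true. In the paper the construction is deterministic: linear separators from polynomial evaluation, composed twice. The output has exactly $N$ sets, and Lemma~\ref{lem:budget} supplies the $N^{C}$ bound. With $h=\Theta(\log B/(k\log\log B))$ and $A=\lceil (h/k)^k\rceil$, one gets $\log A\le 1+\log B/64$, hence $DA\le N^{O(1)}$ using $D\le C_0k\log N$. The estimates $|V|(kq+1)^h\le B$ and $|V|q^k\le B$ then bound both the universe and the explicit acceptance tables. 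These two lemmas are exactly the inputs your composition step needs, and the proposal supplies neither.
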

Independently and concurrently, Guruswami and Ren~\cite{GR26}
gave a simple proof that approximating $k$-\textsc{Set-Cover}
within a factor of $\frac{\log n}{\log\log n}$ is $\mathsf{W[1]}$-hard.
They also ruled out $f(k)\cdot n^{o(k/\log k)}$-time algorithms
achieving the same approximation ratio under ETH.

\paragraph{Hardness close to exhaustive search.}
Our construction also gives a SETH lower bound with an exponent
arbitrarily close to $k$. Here $k$ is fixed before the universe size
tends to infinity.

\begin{theorem}[SETH lower bound]\label{thm:seth}
Assume SETH. For every fixed integer $k\ge 2$ and fixed constant
$\delta>0$, there are constants $\eta=\eta(k,\delta)>0$
and $n_0=n_0(k,\delta)$ such that
no deterministic algorithm distinguishes
\[
  \operatorname{opt}(\Gamma)\le k
  \qquad\text{from}\qquad
  \operatorname{opt}(\Gamma)>
  k\eta\frac{\log n}{\log\log n}
\]
in time $O(|\Gamma|^{k-\delta})$ on instances with universe size
$n=|U|\ge n_0$. The threshold $n_0$ is chosen so that the
displayed approximation factor is at least one.
\end{theorem}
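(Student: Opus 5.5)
We reduce from $k$-\textsc{Orthogonal Vectors}, or more precisely from CNF-SAT via the standard split into $k$ blocks. Under SETH, for every $\delta'>0$ there is a clause bound $q$ such that $q$-CNF-SAT on $N$ variables cannot be solved in time $2^{(1-\delta')N}$. Split the $N$ variables into $k$ groups of $N/k$ variables each. Each group contributes $M=2^{N/k}$ partial assignments, and these will be the candidate sets of part $i$. The instance $\Gamma$ will have $m=kM$ sets and a universe of size $n=\operatorname{poly}(N)$, or at least $n=2^{o(N/k)}$, so that $|\Gamma|=M^{1+o(1)}$. An $O(|\Gamma|^{k-\delta})$ algorithm then runs in time $2^{(1-\delta/k+o(1))N}$, contradicting SETH once $\delta'<\delta/k$. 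The hardness factor must come out as $\eta\log n/\log\log n$ with $\eta$ depending only on $k,\delta$.

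The plan is to reuse the gap construction underlying \Cref{thm:w1,thm:eth}, which this paper builds before the SETH section, instantiated with $k$ fixed. I would proceed as follows.

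\textbf{Step 1: a constraint problem on $k$ parts.} Encode satisfiability as a $k$-partite problem: pick one assignment per part so that every clause is satisfied by some part. This is a "covering by one choice per part" problem whose completeness case has a cover of size exactly $k$.

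\textbf{Step 2: the gap gadget.} Compose with the paper's threshold or hypercube-type gadget (the one producing factor $\gamma_k(n)=c\log n/(k^2\log\log n)$). The intended properties are:
\begin{itemize}[nosep]
\item in the yes case, choosing the $k$ consistent sets covers $U$;
\item in the no case, any family of fewer than $k\cdot\eta\log n/\log\log n$ sets misses an element.
\end{itemize}
The soundness argument should be the same as in \Cref{thm:eth}: a small cover must contain, in some "coordinate", a $k$-tuple that satisfies all clauses. Since $k$ is fixed, the $1/k^2$ loss is absorbed into $\eta$.

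\textbf{Step 3: parameter accounting.} Choose the gadget's alphabet and dimension as functions of $N$. The gadget needs roughly $\log M\approx N/k$ bits of "identity" per set, and the universe must stay small. In \Cref{thm:w1} the universe is $n=O((\log m)^{1+\alpha})$, and here $\log m\approx N/k$, so $n=\operatorname{poly}(N)$. Thus $\log n=\Theta(\log N)$, and the gap $\log n/\log\log n$ is achievable. The clause checks add $\operatorname{poly}(N)$ elements, which stays within this bound.

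\textbf{Step 4: the running-time bound.} With $|\Gamma|\le kM\cdot n=2^{N/k}\operatorname{poly}(N)$ and the reduction running in time $O(2^{N/k}\operatorname{poly}(N))$, the inequality in the first paragraph follows. The threshold $n_0$ is chosen so that the displayed factor is at least $1$.

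\textbf{Main obstacle.} The ETH reduction of \Cref{thm:eth} likely starts from a problem such as $k$-Clique or CSP, whose $k$ parts are blown up polynomially, losing a constant factor in the exponent; that is why it yields only $\varepsilon k$. For SETH we need the source to already be $k$-partite with parts of size exactly $2^{N/k}$ and no blowup, meaning the sets must be the partial assignments themselves. The key question is whether the gap gadget can be applied directly to a $k$-partite covering (hitting) relation rather than to pairwise constraints.

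If pairwise consistency checks are required, I would first try the standard trick from Lin's SETH bound. Each set in part $i$ covers the clauses its partial assignment satisfies, and the gadget's soundness only needs the "all clauses covered by the union" predicate. This is a $k$-ary OR-type constraint, which threshold-graph gadgets handle natively. The delicate point is verifying that the soundness analysis tolerates the union predicate with $k$ fixed and $n$ only polynomial in $N$.
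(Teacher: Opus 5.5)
Your outer frame matches the paper: split into $k$ groups of candidates, keep $|\Gamma|$ nearly linear in the group size, and contradict SETH. The gap is in Step~2. Nothing you cite creates a gap from the $k$-partite orthogonality predicate.

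In the paper, the local acceptance family only \emph{transfers} a list-value gap of a RectLabelCover instance to Set Cover. Its local soundness gives, for each test separately, some passing tuple built from the lists. A gap requires the opposite guarantee: for every small list labeling, a \emph{single} test rejects all tuples generated by those lists. For the W[1] and ETH theorems, that guarantee comes from linearity. A test $\psi$ satisfies $\sum_i\psi(x_i)=\psi(\sum_i x_i)$, so a linear separator keeps all of the at most $(h/k)^k$ nonzero sums nonzero at once.

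The predicate $\sum_j\prod_i(x_i)_j=0$ has no such structure. You also cannot detour through Vector Sum: grouped $k$-Vector Sum is solvable in $N^{\lceil k/2\rceil+o(1)}$ time by meet-in-the-middle, so it is never $N^{k-\delta}$-hard.

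Your fallback is to let each partial assignment cover the clauses it satisfies; equivalently, tests are clauses and messages are one bit. This creates no gap. Its list value is just the optimum of the plain clause-covering instance. In a NO instance, two assignments from one part plus one from each other part may already cover every clause. Threshold-graph compositions do not rescue this at the $\log n/\log\log n$ scale; they stop at a $k$-dependent root of $\log n$.

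The missing ingredient is a short-message distributed protocol for $k$-party disjointness. Constant-length messages force nondeterministic advice, because disjointness has linear randomized communication complexity. The paper uses the Karthik--Laekhanukit--Manurangsi simultaneous-message protocol based on algebraic geometric codes:
\begin{itemize}
\item the referee receives $\lceil D/a\rceil$ advice bits;
\item each player sends $\ell_0=\mathrm{poly}(k,a)$ bits on a public test from $R_0$, with $|R_0|=D^{O(1)}$;
\item completeness is perfect under the correct advice;
\item under every advice string, a false tuple passes with probability at most $1/2$.
\end{itemize}
Repeat it $t=\lceil k\log h\rceil+1$ times with the same advice, and take a union bound over the at most $h^k$ tuples generated by lists of total size $h$. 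For each advice $\mu$ this gives an instance $I_\mu$ with $\operatorname{list\text{-}val}(I_\mu)>h$ in the NO case and alphabet size $Q=2^{t\ell_0}=h^{O(k\ell_0)}$.

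Write $N$ for the number of candidates per group (your $M$). Take $h=\Theta_{k,\delta}(\log N/\log\log N)$ and $\tau=\delta/(16k)$. The local acceptance family then has universe at most $|R_0|^t(kQ+1)^h\le N^{\tau/4}$, which is padded to $n=\lceil N^{\tau}\rceil$ rather than to polylogarithmic size.

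Finally, the reduction is an OR over the $O(N^{\lambda})$ advice strings, with $\lambda=\delta/16$ and $a\ge C_0/\lambda$. Your Step~4 omits this factor. The calls cost $N^{\lambda}\cdot N^{(1+\tau+o(1))(k-\delta)}\le N^{k-7\delta/8+o(1)}$, and all constructions cost $N^{1+\lambda+\tau+o(1)}$. Together these contradict the $N^{k-\delta/2}$ SETH hardness of disjointness selection.
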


%Thus, for fixed $k\ge 2$, an $o(\log n/\log\log n)$ approximation cannot save even a constant in the exponent. The constant $\eta$ may depend on both $k$ and $\delta$; this theorem does not assert the polynomial dependence on $k$ in the first two results. Section~\ref{sec:seth} bounds the explicit output bit length directly to retain the exponent $k-\delta$.

\subsection{Proof overview}

\paragraph{Starting from Vector Sum.}
Our source problem is grouped binary
$k$-\textnormal{\textsc{Vector Sum}}.
Given $X_1,\ldots,X_k\subseteq\mathbb F_2^D$, the task is to
choose one vector $x_i\in X_i$ from each group so that
$\sum_i x_i=\mathbf 0$.
Let $N:=\sum_i|X_i|$.
Even when $D=O(k\log N)$, this problem is $\mathsf{W[1]}$-hard
parameterized by $k$, and hence has no FPT algorithm unless
$\mathsf{W[1]}=\mathsf{FPT}$.
Under ETH, there is an absolute constant $\eta_0>0$ such that
no deterministic algorithm solves it in time $f(k)N^{\eta_0 k}$
for any computable function $f$.
These source hardness results are stated in
Lemmas~\ref{lem:vs-w1} and~\ref{lem:vs-eth}.

We reduce these instances to
\textnormal{\textsc{Rectangular Label Cover}}.
Recall that an instance consists of $k$ label groups
$W_1,\ldots,W_k$ and a collection of tests $V$.
Each test $v$ has projections $\pi_{v,i}:W_i\to\Sigma$
and a set $A_v\subseteq\Sigma^k$ of accepting message tuples.
A list labeling selects subsets $L_i\subseteq W_i$.
It covers a test $v$ if some choice of one label from each list
produces an accepting message tuple; this choice may depend on $v$.
The objective $\operatorname{list\text{-}val}(I)$ is the minimum
total list size $\sum_i|L_i|$ needed to cover every test.

In our reduction, we set $W_i=X_i$ and $\Sigma=\mathbb F_2^d$.
Each test $v$ is specified by a linear map
$\psi_v:\mathbb F_2^D\to\mathbb F_2^d$, with
\[
  \pi_{v,i}(x_i)=\psi_v(x_i),
  \qquad
  A_v=\left\{
    (a_1,\ldots,a_k)\in\Sigma^k:
    \sum_{i=1}^k a_i=\mathbf 0
  \right\}.
\]
Thus a Vector Sum solution $(x_1,\ldots,x_k)$ becomes the
singleton list labeling $L_i=\{x_i\}$.
It covers every test because linearity gives
\[
  \sum_{i=1}^k\pi_{v,i}(x_i)
  =\psi_v\!\left(\sum_{i=1}^k x_i\right)
  =\mathbf 0.
\]
Since covering a test requires at least one label from each group,
the resulting instance has $\operatorname{list\text{-}val}(I)=k$.

\paragraph{Creating a gap in Rectangular Label Cover.}
Fix an integer $h\ge k$.
In a NO instance of Vector Sum, every choice
$x_i\in X_i$ satisfies $\sum_i x_i\ne\mathbf 0$.
To exclude a list labeling of total size at most $h$, we construct
a test that rejects every tuple generated by its lists.

For any lists $L_i\subseteq X_i$ with $\sum_i|L_i|\le h$,
consider the set of sums
\[
  Z:=\left\{
    \sum_{i=1}^k x_i:
    (x_1,\ldots,x_k)\in L_1\times\cdots\times L_k
  \right\}.
\]
Every vector in $Z$ is nonzero, and
\[
  |Z|\le\prod_{i=1}^k|L_i|
  \le(h/k)^k.
\]
Let $A:=\lceil(h/k)^k\rceil$.
We construct a family of linear maps
$\{\psi_v:\mathbb F_2^D\to\mathbb F_2^d\}_{v\in V}$
such that, for every set
$S\subseteq\mathbb F_2^D\setminus\{\mathbf 0\}$ with $|S|\le A$,
there is a test $v\in V$ satisfying
$\psi_v(z)\ne\mathbf 0$ for every $z\in S$.

Applying this property to $Z$, we obtain a test $v$ such that
\[
  \sum_{i=1}^k\pi_{v,i}(x_i)
  =\psi_v\!\left(\sum_{i=1}^k x_i\right)
  \ne\mathbf 0
\]
for every tuple $(x_1,\ldots,x_k)\in L_1\times\cdots\times L_k$.
Thus the same test $v$ rejects every tuple generated by the lists,
so the lists do not cover $v$.

The construction uses the root bound for polynomials.
Encode each vector $z\in\mathbb F_2^D$ as a polynomial $p_z$
whose coefficients are the coordinates of $z$.
If $z\ne\mathbf 0$, then $p_z$ is a nonzero polynomial of
degree at most $D-1$, and hence has at most $D-1$ roots.
Over a binary extension field of size $q$ with
$DA\le q\le 2DA$, any collection of at most $A$ such polynomials
has fewer than $q$ roots in total.
Some evaluation point therefore makes all of them nonzero.
Since polynomial evaluation is linear in the coefficients,
the evaluation maps form the required family of linear maps.

This first evaluation produces messages with
$O(\log(2DA))$ bits.
Applying the same construction to these shorter binary vectors
and composing the maps from both stages preserves the above
property while reducing the message alphabet.
The resulting family satisfies
\[
  |\Sigma|=O(A\log(2DA)),
  \qquad
  |V|=O(DA^2\log(2DA)).
\]
Consequently, a YES instance admits singleton lists covering
every test, whereas in a NO instance every list labeling of
total size at most $h$ leaves some test uncovered.
This gives the desired gap
\[
  \operatorname{list\text{-}val}(I)=k
  \qquad\text{versus}\qquad
  \operatorname{list\text{-}val}(I)>h.
\]

\paragraph{Replacing the hypercube.}
We next convert this list gap into a
\textnormal{\textsc{Set Cover}} gap.
The communication-based and original threshold-graph reductions
use a hypercube partition system in the tradition of Feige's
reduction~\cite{Feige98,DBLP:journals/siamcomp/ChalermsookCKLM20,DBLP:journals/jacm/SLM19,DBLP:conf/icalp/Lin19,DBLP:conf/sosa/SN21}.
For a test $v$ with acceptance table $A_v$, the usual hypercube
encoding produces $k^{|A_v|}$ universe elements.
Since $A_v$ may contain as many as $|\Sigma|^k$ accepting tuples,
this encoding incurs a large increase in the universe size.

Our main technical contribution is an $h$-local acceptance family
that replaces this hypercube encoding.
For one test, let $\Omega:=[k]\times\Sigma$ be the set of
group-tagged messages.
Regard each accepting tuple as a $k$-element subset of $\Omega$,
and let $\mathcal A\subseteq 2^\Omega$ be the family of these
accepting configurations.
A projected list passes the test precisely when it contains
some member of $\mathcal A$.

Define
\[
  \mathcal U:=
  \{X\subseteq\Omega:
    |X|\le h
    \text{ and no } A\in\mathcal A \text{ satisfies } A\subseteq X\},
\]
and, for each message $e\in\Omega$, define
\[
  P_e:=\{X\in\mathcal U:e\notin X\}.
\]
Every accepting configuration covers $\mathcal U$:
for $A\in\mathcal A$ and $X\in\mathcal U$, some message
$e\in A\setminus X$ covers $X$.
Conversely, a rejecting message collection $J$ with $|J|\le h$
fails to cover the universe element $X=J$.
The universe therefore has the required completeness and soundness
properties, with
\[
  |\mathcal U|\le(k|\Sigma|+1)^h.
\]
This bound depends on the number of messages and the threshold $h$,
rather than exponentially on the number of accepting tuples.

We take a disjoint universe block for every test and introduce
one set for every original label.
In the block for test $v$, the set corresponding to
$\lambda\in W_i$ uses $P_{(i,\pi_{v,i}(\lambda))}$.
The resulting instance satisfies
\[
  m=N,
  \qquad
  |U|\le |V|(k|\Sigma|+1)^h.
\]
A labeling that passes every test gives a cover of size $k$.
Conversely, any cover of size at most $h$ would give a list labeling
of total size at most $h$ that covers every test.
Hence the list gap becomes
\[
  \operatorname{opt}(\Gamma)\le k
  \qquad\text{versus}\qquad
  \operatorname{opt}(\Gamma)>h.
\]

\paragraph{Obtaining near-logarithmic hardness.}
Let $B$ be a target universe budget satisfying
$(\log N)^{1+\alpha}\le B\le N^{O(1)}$ for a fixed $\alpha>0$.
For sufficiently large $N$ depending on $k$, we choose
\[
  h=\Theta\!\left(
    \frac{\log(B/\log N)}{k\log\log B}
  \right),
  \qquad
  A=\left\lceil(h/k)^k\right\rceil,
\]
with a sufficiently small constant in the choice of $h$.

The two-stage separator contributes a factor proportional to $D$
to the number of tests, but only logarithmic dependence on $D$
to the message alphabet.
Together with the new set system, this lets us bound
\[
  \log|U|
  \le \log|V|+h\log(k|\Sigma|+1)
  \le \log B,
\]
even when $B$ is polylogarithmic in $N$.
Since
$\log(B/\log N)\ge\frac{\alpha}{1+\alpha}\log B$,
the approximation gap is
\[
  \frac hk
  =\Omega_\alpha\!\left(
    \frac{\log B}{k^2\log\log B}
  \right).
\]
Duplicating universe elements makes the universe size exactly
$n=B$ without changing the optimum.

The construction has a polynomial input-size exponent independent
of $k$ once $N$ exceeds a computable threshold depending on $k$.
Smaller instances are handled in FPT time.
Combining this reduction with the source hardness gives the
W[1] and ETH results.

\paragraph{The SETH lower bound.}
For the SETH result, we obtain the list gap from $k$-way disjointness
using the simultaneous-message protocol of Karthik, Laekhanukit,
and Manurangsi~\cite{DBLP:journals/jacm/SLM19}, which uses
algebraic geometric codes.
With $k$ and $\delta$ fixed, each player sends a message of constant
length, while the referee receives a short advice string.

Repeating the protocol $O(k\log h)$ times with the same advice
makes the acceptance probability of each false tuple less than
$h^{-k}$.
For each fixed advice string and each collection of lists containing
at most $h$ labels in total, there are at most $h^k$ induced tuples.
A union bound therefore gives a test rejecting all of these tuples.
Perfect completeness is preserved under the correct advice.
We then apply the same local acceptance family to obtain the
\textnormal{\textsc{Set Cover}} gap.

We enumerate the advice strings and explicitly generate every
output instance.
Taking a small universe budget $N^\tau$, where $N$ now denotes
the number of candidates per group, makes each output's bit length
$N^{1+\tau+o(1)}$.
Choosing $\tau$ and the advice length sufficiently small relative
to $\delta$ allows us to pay for both the construction and all
algorithm calls within the available exponent saving.
The resulting threshold is
$h=\Theta_{k,\delta}(\log N/\log\log N)$, yielding the
near-logarithmic approximation gap together with the
$|\Gamma|^{k-\delta}$ running-time lower bound.
\paragraph{Organization.}
Section~\ref{sec:prelim} defines the problems and records the source
hardness statements.
Section~\ref{sec:linear} constructs linear separators and reduces
grouped binary \textnormal{\textsc{Vector Sum}} to rectangular
Label Cover.
Section~\ref{sec:local} gives the local acceptance family, reduces
rectangular Label Cover to \textnormal{\textsc{Set Cover}}, and
proves the W[1] and ETH theorems.
Section~\ref{sec:seth} establishes the SETH lower bound.
For completeness, Appendix~\ref{app:source} gives direct proofs of
the source hardness statements for the exact grouped, binary
formulation used here.

\section{Preliminaries}\label{sec:prelim}

For a positive integer $k$, write $[k]:=\{1,\ldots,k\}$. Unless stated otherwise, all finite families are represented explicitly, and all algorithms are deterministic. Minima over empty collections are $+\infty$. A family of sets is indexed: different indices may describe identical subsets. Retaining duplicate sets does not change the optimum.

\subsection{Complexity assumptions}

An algorithm is fixed-parameter tractable with parameter $k$ if its running time on an instance $I$ is $f(k)|I|^C$, where $|I|$ is the input length, $f$ is computable, and $C$ is a constant independent of $k$. We use FPT many-one reductions for W[1]-hardness. The assumption $\mathsf{W[1]}\ne\mathsf{FPT}$ implies that parameterized \textnormal{\textsc{Clique}} has no FPT algorithm.

We use the following clause-count formulation of ETH, which is equivalent to the usual variable-count formulation by the sparsification lemma~\cite{DBLP:journals/jcss/ImpagliazzoPZ01}.

\begin{hypothesis}[Exponential Time Hypothesis]\label{hyp:eth}
There is a constant $\delta>0$ such that satisfiability of $3$CNF formulas with $M$ clauses cannot be decided deterministically in time $O(2^{\delta M})$.
\end{hypothesis}

For the bound with running-time exponent $k-\delta$, we use the stronger variable-count hypothesis.

\begin{hypothesis}[Strong Exponential Time Hypothesis]\label{hyp:seth}
For every constant $\zeta\in(0,1)$, there is an integer $d\ge 3$ such that satisfiability of $d$CNF formulas on $v$ variables cannot be decided deterministically in time $O(2^{(1-\zeta)v})$.
\end{hypothesis}

\subsection{Set Cover}

A \textnormal{\textsc{Set Cover}} instance is a pair $\Gamma=(\mathcal{S},U)$, where $\mathcal{S}=(S_1,\ldots,S_m)$ is an indexed family of subsets of $U$. Write $n:=|U|$ and
\[
  \operatorname{opt}(\Gamma):=
  \min\left\{|C|:C\subseteq[m],\ \bigcup_{j\in C}S_j=U\right\}.
\]
We use an explicit incidence representation, including the lists of set and element indices; $|\Gamma|$ denotes its bit length. All constructions below also admit a dense incidence representation of length $O(mn+m+n)$. Polynomial changes of encoding preserve our FPT and constant-times-$k$ ETH bounds after adjusting constants. For the precise $k-\delta$ exponent under SETH, we bound the explicit output bit length directly in Section~\ref{sec:seth}.

\begin{definition}[$(k,h)$-Gap \textnormal{\textsc{Set Cover}}]\label{def:gap-sc}
For an integer $k\ge 1$ and a real $h\ge k$, the task is to distinguish $\operatorname{opt}(\Gamma)\le k$ from $\operatorname{opt}(\Gamma)>h$. The parameter is $k$.
\end{definition}

A factor-$\gamma_k(n)$ approximation algorithm, given a target $k$, must return a cover of size at most $k\gamma_k(n)$ whenever a cover of size at most $k$ exists. Such an algorithm solves the corresponding gap problem by checking the returned cover and its size.

\subsection{Grouped binary Vector Sum}

\begin{definition}[$k$-\textnormal{\textsc{Vector Sum}}]\label{def:vector-sum}
Given finite groups $X_1,\ldots,X_k\subseteq\mathbb{F}_2^D$, decide whether there are $x_i\in X_i$ such that $\sum_{i=1}^k x_i=\mathbf{0}$. The parameter is the number $k$ of groups, and $N:=\sum_{i=1}^k|X_i|$ is the total number of vectors.
\end{definition}

A nonzero target $t$ can be changed to zero by replacing $X_1$ with $\{x+t:x\in X_1\}$. This observation only translates the target; it does not by itself convert an ungrouped sparse-solution formulation into the grouped problem. To avoid relying on such an identification, we give direct proofs of the following statements in Appendix~\ref{app:source}. They are consistent with the established hardness of sparse linear systems~\cite{DBLP:conf/esa/BhattacharyyaGGS16,DBLP:conf/innovations/BhattacharyyaIWX11}.

\begin{lemma}\label{lem:vs-w1}
There is an absolute constant $C_0$ such that grouped binary $k$-\textnormal{\textsc{Vector Sum}} is W[1]-hard even when $D\le C_0k\log N$.
\end{lemma}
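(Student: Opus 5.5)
We reduce from parameterized \textnormal{\textsc{Multicolored Clique}}, which is W[1]-hard under $\mathsf{W[1]}\ne\mathsf{FPT}$. The instance is a graph $G$ with vertex classes $V_1,\ldots,V_t$, each of size at most $n_G$, and we ask for a clique with one vertex per class. We build $k=t+\binom{t}{2}$ groups, one for each class and one for each pair of classes. The idea is to encode vertex identities with a linear code so that ``the edge group agrees with both vertex groups'' becomes a zero-sum condition.

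\textbf{Encoding.} Fix an injective map $\beta:[n_G]\to\mathbb F_2^{\ell}$ with $\ell=\lceil\log(n_G+1)\rceil$. For each pair $\{a,b\}$ of classes and each class $a$, reserve a coordinate block $B_{\{a,b\},a}\cong\mathbb F_2^{\ell}$. This uses $D=t(t-1)\ell$ coordinates in total.
\begin{itemize}
\item For a vertex $u\in V_a$, the group $X_a$ contains the vector $x_u$ that places $\beta(u)$ in the blocks $B_{\{a,b\},a}$ for every $b\ne a$, and zero elsewhere.
\item For an edge $uw$ with $u\in V_a$, $w\in V_b$, the group $X_{\{a,b\}}$ contains the vector $y_{uw}$ that places $\beta(u)$ in $B_{\{a,b\},a}$, places $\beta(w)$ in $B_{\{a,b\},b}$, and is zero elsewhere.
\end{itemize}

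\textbf{Correctness.} Block $B_{\{a,b\},a}$ receives contributions only from the chosen vector of $X_a$ and the chosen vector of $X_{\{a,b\}}$. The total sum therefore vanishes exactly when every edge-group choice has endpoints equal to the chosen vertices, which holds exactly when the chosen vertices form a multicolored clique. Injectivity of $\beta$ is what makes a zero block force equal vertices.

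\textbf{Parameters.} We have $N\ge n_G$, hence $\ell\le\log N+1$. The dimension is $D\le t^2(\log N+1)$. Since $k\ge t^2/2$ for $t\ge 2$, this gives $D\le 2k(\log N+1)\le C_0k\log N$, taking $N\ge 2$ and, say, $C_0=4$.

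The new parameter $k$ is a computable function of $t$, and the construction runs in polynomial time, so this is an FPT many-one reduction. Note that a stray $+\log$ term only matters for tiny $N$, and those instances are solved directly.

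The only delicate point is the bound $D\le C_0k\log N$. The dimension is $\Theta(t^2\log n_G)$, and this fits within $O(k\log N)$ only because the number of groups $k$ is itself quadratic in $t$. Therefore we must not compress the edge groups into fewer groups. If the paper intends $k$ to mean the clique parameter, we instead pad with dummy groups $\{\mathbf 0\}$ to make $k\ge t^2$. Each such group contributes only the single vector $\mathbf 0$, so this does not affect solvability.
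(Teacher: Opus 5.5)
Your proposal is correct and is essentially the paper's construction: both use $r+\binom{r}{2}$ groups, one $\ell$-bit block per ordered pair of indices, vertex vectors that repeat the identifier in all blocks $(i,\cdot)$, and edge vectors that write the two endpoint identifiers into blocks $(i,j)$ and $(j,i)$. The only difference is that you start from Multicolored Clique, so distinctness of the chosen vertices is automatic, whereas the paper starts from plain Clique with every vertex in every vertex group and both orientations of each edge, and uses the absence of loops to force distinct vertices; your closing padding remark is unnecessary, since $k$ in the lemma is already the number of groups, which is exactly what your bound $D\le 4k\log N$ uses.
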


\begin{lemma}\label{lem:vs-eth}
Assume ETH. There are absolute constants $C_0,\eta>0$ such that, for every computable function $f$, no deterministic algorithm solves grouped binary $k$-\textnormal{\textsc{Vector Sum}} with $D\le C_0k\log N$ in time $f(k)N^{\eta k}$. The statement remains true after restricting to even $k$ above any fixed constant threshold.
\end{lemma}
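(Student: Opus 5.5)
The statement immediately above is \Cref{lem:vs-eth}, and I will prove it by a direct reduction from 3SAT with $M$ clauses, after sparsification as in \Cref{hyp:eth}. The aim is a grouped binary Vector Sum instance with $k$ groups, $N=2^{O(M/k)}$ vectors, and dimension $D=O(M)=O(k\log N)$. An $f(k)N^{\eta k}$ algorithm would then give $2^{O(\eta M)}$-time 3SAT for a suitable fixed $k$, contradicting ETH once $\eta$ is small.

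\textbf{Construction.} First reduce to a 3SAT instance in which every variable occurs in $O(1)$ clauses; sparsification gives this up to a linear change in $M$. Partition the variables into $k$ blocks $B_1,\ldots,B_k$ of about $n/k$ variables each, where $n=O(M)$. The group $X_i$ contains one vector for each assignment $\sigma$ to $B_i$. The coordinates of $\mathbb{F}_2^D$ are indexed by pairs (clause $C$, block $i$ touching $C$) together with a few consistency slots per clause.

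\textbf{Clause gadget.} A clause meets at most three blocks, so its satisfying partial assignments split across at most three groups. A pure sum-to-zero test cannot express ``at least one literal true'' directly, so I use auxiliary selections. For each clause I want a constant-size linear gadget whose solvable configurations are exactly the satisfying local assignments. The natural way to get this is to add extra groups that choose the satisfying local assignment. This multiplies $k$ by a factor of $M/k$, which is unacceptable, so instead the auxiliary choices must be absorbed into existing groups.

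\textbf{Absorbing the choices.} Assign each clause to one block $i(C)$ that it touches. The vectors of $X_{i(C)}$ are then indexed by pairs (assignment to $B_i$, choice of a satisfying local assignment $\tau_C$ for every clause assigned to $i$). A clause touches at most three variables, so there are at most $7$ choices per clause. Since each block is assigned $O(M/k)$ clauses, we still have $|X_i|\le 2^{O(M/k)}$.

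\textbf{Consistency coordinates.} For each clause $C$ and each variable $y$ in $C$, add one coordinate. The owning vector contributes the bit $\tau_C(y)$. The vector of the block containing $y$ contributes $\sigma(y)$. The sum is zero exactly when the two agree. Each vector therefore writes a deterministic bit pattern, and $\sum_i x_i=\mathbf 0$ holds iff all the chosen local assignments are consistent with the global assignment. That happens iff the formula is satisfiable. The dimension is $D\le 3M$, and $\log N\ge \Omega(M/k)$, so $D\le C_0k\log N$. Padding the groups ensures $\log N$ is not smaller.

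\textbf{Running time and even $k$.} The reduction takes $\mathrm{poly}(N)$ time. For even $k$ above a threshold, choose $k$ as a large even constant. We get $N^{\eta k}=2^{\eta k\cdot O(M/k)}=2^{O(\eta M)}$, and $f(k)$ is a constant, so this beats ETH for $\eta$ small enough. The even-$k$ condition is arbitrary because $k$ is ours to choose.

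\textbf{Main obstacle.} I expect the delicate point to be keeping $N$ at exactly $2^{\Theta(M/k)}$ with balanced blocks, uniformly across sizes. That is, I must confirm that $\log N=\Theta(M/k)$ so that both $D\le C_0 k\log N$ and the ETH exponent hold with absolute constants.
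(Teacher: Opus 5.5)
Your proposal is correct. It is a mild variant of the paper's reduction rather than the same construction. Both proofs encode a 3CNF by consistency coordinates on which a variable-assignment vector and a clause-witness vector must agree, with $2^{O(M/k)}$ vectors per group and $D=O(M)$. Both finish by fixing one large even $k$ and a small absolute $\eta$.

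\textbf{The paper's construction.} It keeps the witnesses in their own groups: $r$ variable groups and $r$ clause groups, with one coordinate per literal occurrence, so $D=3M$. Each clause-group vector is an arbitrary choice of satisfying bit triples, exactly $7$ per clause even when a variable repeats, because repeated occurrences are reconciled by the variable groups. This gives $7^{\lfloor M/r\rfloor}\le N\le 2r\,2^{3\lceil M/r\rceil}$ at once, needs no preprocessing of the formula, and makes $k=2r$ even by construction.

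\textbf{Your construction.} You fold each witness block into a variable group through an ownership map. This halves the number of groups and works for every $k$, at the cost of a little bookkeeping. When the owner also contains $y$, its vector writes $\tau_C(y)\oplus\sigma(y)$, which is harmless.

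\textbf{Steps you can drop.} Two of your auxiliary steps are unnecessary.
\begin{itemize}
\item The owner need not touch the clause. The coordinate $(C,y)$ links the owner and the block of $y$ regardless, so a balanced arbitrary assignment of clauses to groups bounds the number of owned clauses by $\lceil M/k\rceil$ without any occurrence bound. If you do want bounded occurrence from the clause-count hypothesis used here, variable splitting with an implication cycle gives it with only linear blowup.
\item Padding is not needed. Since $X_i$ is a set, padding would require fresh vectors chosen so they create no spurious zero sums. Instead, for fixed $\sigma$, distinct witness choices differ on the disjoint coordinates of the owned clauses. Hence $|X_i|\ge 7^{o_i}$, where $o_i$ is the number of clauses owned by group $i$, and so $\log N\ge (M/k)\log 7$.
\end{itemize}
With your per-variable coordinates, that last bound needs each clause to have three distinct variables; alternatively, under your bounded-occurrence assumption, use $|X_i|\ge 2^{|B_i|}$. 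Indexing coordinates by literal occurrences, as the paper does, removes this caveat entirely.
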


\subsection{Rectangular Label Cover}

A $k$-\textnormal{\textsc{RectLabelCover}} instance $I=(W,V,\Sigma,\Pi,A)$ consists of label groups $W=(W_1,\ldots,W_k)$, a nonempty finite test set $V$, a finite message alphabet $\Sigma$, projections $\pi_{v,i}:W_i\to\Sigma$, and acceptance tables $A_v\subseteq\Sigma^k$. The projections and acceptance tables are explicit. If $N:=\sum_i|W_i|$, their total bit length is bounded by a fixed polynomial in $N+k+|V|+|\Sigma|^k$; this bound includes the bits needed to represent indices and message coordinates.

The name refers to the rectangular binary constraints obtained by viewing
$A_v$ as the label domain of test $v$. An accepting tuple
$\mathbf a\in A_v$ is compatible with $\lambda\in W_i$ exactly when
$a_i=\pi_{v,i}(\lambda)$. This is an equality of two projections,
and hence a rectangular relation in the sense
of~\cite[Definition~5]{GOS20}.

A labeling $\lambda=(\lambda_1,\ldots,\lambda_k)$ chooses $\lambda_i\in W_i$. It covers a test $v$ if
\[
  (\pi_{v,1}(\lambda_1),\ldots,\pi_{v,k}(\lambda_k))\in A_v.
\]
Write $\operatorname{val}(I)$ for the largest fraction of tests covered by a labeling, with value zero if no labeling exists. A collection of labelings covers $v$ if one of its members covers $v$. Let $\operatorname{tuple\text{-}val}(I)$ be the minimum number of labelings whose collection covers every test.

A list labeling is a tuple $L=(L_1,\ldots,L_k)$ with $L_i\subseteq W_i$. Its total size is $|L|:=\sum_i|L_i|$. It covers $v$ if
\[
  A_v\cap\prod_{i=1}^k\pi_{v,i}(L_i)\ne\varnothing,
  \qquad
  \pi_{v,i}(L_i):=\{\pi_{v,i}(\lambda):\lambda\in L_i\}.
\]
Let $\operatorname{list\text{-}val}(I)$ be the minimum total size of a list labeling that covers all tests. Since $V$ is nonempty, such a list has at least one label from each group. In particular, $\operatorname{list\text{-}val}(I)=k$ if and only if a single labeling covers every test.

In the associated label-expanded bipartite graph, the left super-nodes
are $\{v\}\times A_v$, the right super-nodes are $W_i$, and edges
represent the compatibility relation above. Thus $\operatorname{val}(I)$
and $\operatorname{list\text{-}val}(I)$ are respectively the
normalized \textnormal{\textsc{MaxCover}} value and the
\textnormal{\textsc{MinLabel}} optimum
of~\cite{DBLP:journals/siamcomp/ChalermsookCKLM20,DBLP:journals/jacm/SLM19}.
The same message representation appears in the communication-to-label-cover
construction of~\cite[full version, Theorem~5.2]{DBLP:journals/jacm/SLM19}.
The next lemma expresses the standard
\textnormal{\textsc{MaxCover}}--\textnormal{\textsc{MinLabel}} bound
from~\cite[full version, Proposition~A.1]{DBLP:journals/jacm/SLM19}
through the intermediate quantity $\operatorname{tuple\text{-}val}(I)$.

\begin{lemma}\label{lem:list-tuple}
Every $k$-\textnormal{\textsc{RectLabelCover}} instance satisfies
\[
  \operatorname{tuple\text{-}val}(I)\ge\frac{1}{\operatorname{val}(I)},
  \qquad
  \operatorname{list\text{-}val}(I)\ge
  k\operatorname{tuple\text{-}val}(I)^{1/k},
\]
where $1/0:=+\infty$.
\end{lemma}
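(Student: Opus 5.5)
We prove the two inequalities separately; each is a short counting argument.

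\emph{First inequality.} If $\operatorname{val}(I)=0$, then no labeling covers any test. Since $V$ is nonempty, no collection of labelings covers every test, so $\operatorname{tuple\text{-}val}(I)=+\infty$ and the bound holds. Otherwise, take a collection of $t=\operatorname{tuple\text{-}val}(I)$ labelings that together cover every test. By definition of $\operatorname{val}(I)$, each labeling covers at most $\operatorname{val}(I)\,|V|$ tests. A union bound then gives
\[
  |V|\le t\operatorname{val}(I)|V|,
\]
and hence $t\ge 1/\operatorname{val}(I)$.

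\emph{Second inequality.} Let $L=(L_1,\ldots,L_k)$ be an optimal list labeling covering every test, so $|L|=\operatorname{list\text{-}val}(I)$. If no such list labeling exists, then $\operatorname{list\text{-}val}(I)=+\infty$ and the bound is trivial.

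Now consider the collection of all labelings in $L_1\times\cdots\times L_k$. For each test $v$, the list labeling covers $v$, so some tuple $\mathbf a\in A_v\cap\prod_i\pi_{v,i}(L_i)$ exists. Choose $\lambda_i\in L_i$ with $\pi_{v,i}(\lambda_i)=a_i$ for each $i$. Then the labeling $(\lambda_1,\ldots,\lambda_k)$ belongs to the product and covers $v$. Hence the product collection covers every test, and
\[
  \operatorname{tuple\text{-}val}(I)\le\prod_{i=1}^k|L_i|.
\]

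By AM--GM,
\[
  \prod_{i=1}^k|L_i|\le\left(\frac{|L|}{k}\right)^k.
\]
Combining the two bounds gives $\operatorname{tuple\text{-}val}(I)^{1/k}\le|L|/k$, which is the claim.

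There is no real obstacle here. The only care needed is with the infinite and empty cases: when $\operatorname{tuple\text{-}val}(I)=+\infty$, no covering product can exist, because a covering list labeling would yield a finite covering collection. So in that case $\operatorname{list\text{-}val}(I)=+\infty$ as well, and the second inequality is consistent.
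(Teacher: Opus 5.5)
Your proof is correct and follows the paper's argument: a union bound over the $t$ labelings for the first inequality, and the Cartesian product of a covering list labeling followed by AM--GM for the second. One small case is skipped: when $\operatorname{val}(I)>0$ but no covering collection exists, the first inequality is the trivial $+\infty\ge 1/\operatorname{val}(I)$, which the infinity convention handles just as the paper notes.
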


\begin{proof}
A collection of $t$ labelings covers at most a $t\operatorname{val}(I)$ fraction of the tests. Thus covering all tests requires $t\operatorname{val}(I)\ge 1$. For the second inequality, let $L$ cover every test. Its Cartesian product is a collection of at most $\prod_i|L_i|$ labelings that covers all tests. By the arithmetic--geometric mean inequality,
\[
  \operatorname{tuple\text{-}val}(I)\le\prod_i|L_i|
  \le (|L|/k)^k.
\]
Minimizing over $L$ proves the claim. If no covering collection or list exists, the corresponding inequality follows from the infinity convention.
\end{proof}

\begin{definition}[$(k,h)$-Gap \textnormal{\textsc{RectLabelCover}}]\label{def:gap-rlc}
For integers $h\ge k$, distinguish $\operatorname{list\text{-}val}(I)=k$ from $\operatorname{list\text{-}val}(I)>h$. In the NO case, every list labeling of total size at most $h$ leaves at least one test uncovered.
\end{definition}

\section{From \texorpdfstring{$k$}{k}-Vector Sum to Rectangular Label Cover}
\label{sec:linear}
The framework of Karthik, Laekhanukit, and
Manurangsi~\cite{DBLP:journals/jacm/SLM19} already produces
gap instances in this message representation.
However, directly repeating their Multi-Equality protocol
raises the number of tests to a power that grows with the
desired gap, which does not give the size bounds needed for
our polylogarithmic-universe results.
We therefore use a specialized construction from grouped binary
Vector Sum.
Our two-stage linear compression keeps the number of tests
linear in the source dimension up to logarithmic factors,
while the message alphabet depends only logarithmically on
that dimension.
This parameter control is essential for the W[1] and ETH
results below.
For the SETH result, we use their disjointness protocol directly.
\subsection{Linear Separators}

\begin{definition}[$(D,d,A)$-linear separator]
\label{def:separator}
A family $\mathcal H$ of $\mathbb F_2$-linear maps $\psi\colon\mathbb F_2^D\to\mathbb F_2^d$ is a $(D,d,A)$-linear separator if, for every $Z\subseteq\mathbb F_2^D\setminus\{\mathbf{0}\}$ with $|Z|\le A$, some $\psi\in\mathcal H$ satisfies $\psi(z)\ne\mathbf{0}$ for every $z\in Z$.
\end{definition}

\begin{lemma}
\label{lem:separator-one}
For all integers $D,A\ge1$ and $d\ge\max\{1,\lceil\log_2(DA)\rceil\}$, there is a $(D,d,A)$-linear separator $\mathcal H$ with $|\mathcal H|\le2^d$, constructible deterministically in $2^{O(d)}$ time.
\end{lemma}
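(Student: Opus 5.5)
Take $q:=2^d$ and work in the field $\mathbb F_q$, which is a $d$-dimensional vector space over $\mathbb F_2$; fix an $\mathbb F_2$-linear isomorphism $\iota\colon\mathbb F_q\to\mathbb F_2^d$. For $z=(z_0,\ldots,z_{D-1})\in\mathbb F_2^D$ let $p_z(X):=\sum_{j=0}^{D-1}z_jX^j\in\mathbb F_q[X]$. For each evaluation point $\beta\in\mathbb F_q$ define
\[
  \psi_\beta(z):=\iota\bigl(p_z(\beta)\bigr)=\iota\Bigl(\sum_{j=0}^{D-1}z_j\beta^j\Bigr),
\]
and set $\mathcal H:=\{\psi_\beta:\beta\in\mathbb F_q\}$, so that $|\mathcal H|\le q=2^d$.

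The proof then has three steps.

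\emph{Linearity.} The map $z\mapsto p_z(\beta)$ is $\mathbb F_2$-linear, being an $\mathbb F_2$-combination of the fixed field elements $\beta^j$, and $\iota$ is linear. So each $\psi_\beta$ is $\mathbb F_2$-linear.

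\emph{Separation.} Let $Z\subseteq\mathbb F_2^D\setminus\{\mathbf 0\}$ with $|Z|\le A$. For $z\in Z$, $p_z$ is a nonzero polynomial of degree at most $D-1$, so it has at most $D-1$ roots in $\mathbb F_q$. The set of bad points $\beta$, those with $p_z(\beta)=0$ for some $z\in Z$, therefore has size at most $A(D-1)<DA\le 2^{\lceil\log_2(DA)\rceil}\le 2^d=q$. Hence some $\beta$ is not a root of any $p_z$, and then $\psi_\beta(z)\neq\mathbf 0$ for all $z\in Z$ because $\iota$ is injective. The edge case $D=1$ is included: then $p_z$ is a nonzero constant and every $\beta$ works.

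\emph{Construction time.} We need the field $\mathbb F_{2^d}$ explicitly, that is, an irreducible polynomial of degree $d$ over $\mathbb F_2$. This is the only point needing care. Enumerate all $2^d$ monic degree-$d$ polynomials and test each for irreducibility by trial division against all polynomials of degree at most $d/2$. This takes $2^{O(d)}$ time, and an irreducible polynomial exists for every $d\ge1$. Let $\iota$ be the coefficient map in the basis $1,X,\ldots,X^{d-1}$.

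For the output, each $\psi_\beta$ is given by its $d\times D$ matrix whose columns are $\iota(\beta^j)$. Computing it costs $O(D)$ field operations, each $\mathrm{poly}(d)$, for each of the $2^d$ values of $\beta$. Since $D\le DA\le 2^d$, the total is $2^d\cdot D\cdot\mathrm{poly}(d)=2^{O(d)}$.

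No step is a real obstacle. The main thing to check is that the degree bound $D-1$ combined with $d\ge\lceil\log_2(DA)\rceil$ gives strictly fewer bad points than $q$, and the bound $A(D-1)<DA\le q$ above settles this.
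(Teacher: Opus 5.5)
Your proposal is correct and matches the paper's proof essentially step for step. Both use the evaluation maps $z\mapsto\iota(p_z(\beta))$ over $\mathbb F_{2^d}$, the root-count bound $A(D-1)<2^d$, and brute-force irreducibility testing to build the field, with $D\le 2^d$ keeping the matrix enumeration within $2^{O(d)}$ time.
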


\begin{proof}
Fix a linear isomorphism $\iota\colon\mathbb F_{2^d}\to\mathbb F_2^d$. For every $z=(z_0,\ldots,z_{D-1})\in\mathbb F_2^D$, we define a polynomial $p_z$ over $\mathbb F_{2^d}$ by
\[
 p_z(t):=\sum_{i=0}^{D-1}z_i t^i\qquad(t\in\mathbb F_{2^d}).
\]
Fix $Z\subseteq\mathbb F_2^D\setminus\{\mathbf{0}\}$ with $|Z|\le A$. For every $z\in Z$, $p_z$ is a nonzero polynomial of degree at most $D-1$, hence the number of $t\in\mathbb F_{2^d}$ such that $p_z(t)=0$ is at most $D-1$, and the union of their root sets has size at most $A(D-1)<2^d$. Hence there exists $t_0\in\mathbb F_{2^d}$ such that $p_z(t_0)\ne0$ for every $z\in Z$.

For every $t\in\mathbb F_{2^d}$, define an $\mathbb F_2$-linear map $\psi_t\colon\mathbb F_2^D\to\mathbb F_2^d$ by $\psi_t(z):=\iota(p_z(t))$. Then $\mathcal H:=\{\psi_t:t\in\mathbb F_{2^d}\}$ is a $(D,d,A)$-linear separator with $|\mathcal H|\le2^d$. For a deterministic construction of the field, enumerate monic degree-$d$ binary polynomials and test irreducibility by trial division by all monic polynomials of degree at most $d/2$. This takes $2^{O(d)}$ time. Enumerating the maps as $d\times D$ matrices also takes $2^{O(d)}$ time, since $D\le2^d$.
\end{proof}

\begin{lemma}
\label{lem:separator-compose}
Given a $(D,d_1,A)$-linear separator $\mathcal H_1$ and a $(d_1,d_2,A)$-linear separator $\mathcal H_2$, one can construct a $(D,d_2,A)$-linear separator $\mathcal H$ with $|\mathcal H|\le|\mathcal H_1||\mathcal H_2|$ in time polynomial in the input and output size.
\end{lemma}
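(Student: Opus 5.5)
The plan is to take all compositions: set
\[
  \mathcal H:=\{\psi_2\circ\psi_1:\ \psi_1\in\mathcal H_1,\ \psi_2\in\mathcal H_2\},
\]
regarded as an indexed family, so that $|\mathcal H|\le|\mathcal H_1||\mathcal H_2|$ holds trivially (with or without removing duplicates). Each $\psi_2\circ\psi_1$ is a composition of $\mathbb F_2$-linear maps $\mathbb F_2^D\to\mathbb F_2^{d_1}\to\mathbb F_2^{d_2}$, hence an $\mathbb F_2$-linear map $\mathbb F_2^D\to\mathbb F_2^{d_2}$. Concretely, if the maps are given as matrices $M_1\in\mathbb F_2^{d_1\times D}$ and $M_2\in\mathbb F_2^{d_2\times d_1}$, the composite is the matrix $M_2M_1$. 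Computing all products costs $|\mathcal H_1||\mathcal H_2|\cdot O(d_2d_1D)$ bit operations, which is polynomial in the input and output size.

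For the separator property, fix $Z\subseteq\mathbb F_2^D\setminus\{\mathbf 0\}$ with $|Z|\le A$.
\begin{enumerate}[label=(\roman*)]
\item Since $\mathcal H_1$ is a $(D,d_1,A)$-linear separator, there is some $\psi_1\in\mathcal H_1$ with $\psi_1(z)\ne\mathbf 0$ for every $z\in Z$.
\item Set $Z':=\psi_1(Z)\subseteq\mathbb F_2^{d_1}\setminus\{\mathbf 0\}$. Taking an image cannot increase cardinality, so $|Z'|\le|Z|\le A$.
\item Since $\mathcal H_2$ is a $(d_1,d_2,A)$-linear separator, there is some $\psi_2\in\mathcal H_2$ with $\psi_2(y)\ne\mathbf 0$ for all $y\in Z'$.
\end{enumerate}
Hence $(\psi_2\circ\psi_1)(z)\ne\mathbf 0$ for every $z\in Z$, and $\mathcal H$ is a $(D,d_2,A)$-linear separator.

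No step is a real obstacle. The only point needing care is step (ii): the image $\psi_1(Z)$ must avoid $\mathbf 0$, so that $\mathcal H_2$ applies, and must have size at most $A$. Both hold because $\psi_1$ was chosen to separate $Z$ and images never grow in size. Nothing about linearity is needed for this step; linearity is used only to conclude that the composites are again linear maps.
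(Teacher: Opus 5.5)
Your proposal is correct and matches the paper's proof: take all composites $\psi_2\circ\psi_1$, pick $\psi_1$ separating $Z$, then apply $\mathcal H_2$ to $\psi_1(Z)$, which avoids $\mathbf 0$ and has size at most $A$. Your explicit matrix-product accounting for the running time adds detail the paper leaves implicit, but the argument is the same.
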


\begin{proof}
Let $\mathcal H:=\{\psi_2\circ\psi_1:\psi_1\in\mathcal H_1,\psi_2\in\mathcal H_2\}$. For any $Z\subseteq\mathbb F_2^D\setminus\{\mathbf{0}\}$ with $|Z|\le A$, there exists $\psi_1\in\mathcal H_1$ such that $\mathbf{0}\notin\psi_1(Z)$. Since $|\psi_1(Z)|\le|Z|\le A$, there exists $\psi_2\in\mathcal H_2$ such that $\mathbf{0}\notin\psi_2(\psi_1(Z))$. Hence $\psi_2\circ\psi_1\in\mathcal H$ satisfies $\mathbf{0}\notin(\psi_2\circ\psi_1)(Z)$.
\end{proof}

\begin{lemma}
\label{lem:separator-two}
For all integers $D,A\ge1$, put
\[
 d_1:=\max\{1,\lceil\log_2(DA)\rceil\},\qquad
 d:=\max\{1,\lceil\log_2(Ad_1)\rceil\}.
\]
Then there is a $(D,d,A)$-linear separator $\mathcal H$ with $|\mathcal H|=O(DA^2\log(2DA))$ and $2^d=O(A\log(2DA))$. Moreover, $\mathcal H$ can be constructed deterministically in $(DA)^{O(1)}$ time.
\end{lemma}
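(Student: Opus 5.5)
The plan is to apply \Cref{lem:separator-one} twice and combine the two families with \Cref{lem:separator-compose}. In the first stage, \Cref{lem:separator-one} with parameters $(D,d_1,A)$ applies because $d_1=\max\{1,\lceil\log_2(DA)\rceil\}$ meets its hypothesis exactly. This gives a $(D,d_1,A)$-linear separator $\mathcal H_1$ with $|\mathcal H_1|\le 2^{d_1}$. In the second stage, \Cref{lem:separator-one} is applied with ambient dimension $d_1$ in place of $D$. Since $d=\max\{1,\lceil\log_2(Ad_1)\rceil\}$, its hypothesis again holds, and we obtain a $(d_1,d,A)$-linear separator $\mathcal H_2$ with $|\mathcal H_2|\le 2^{d}$. \Cref{lem:separator-compose} then yields a $(D,d,A)$-linear separator $\mathcal H$ with $|\mathcal H|\le 2^{d_1}2^{d}$.

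The remaining work is bounding these quantities. From $2^{\lceil x\rceil}\le 2\cdot 2^{x}$ for $x\ge 0$ (and the case where the maximum equals $1$ being trivial), we get $2^{d_1}\le \max\{2,2DA\}=O(DA)$ and $2^{d}\le\max\{2,2Ad_1\}=O(Ad_1)$. We also have $d_1\le 1+\log_2(DA)+1=O(\log(2DA))$; here the $2$ inside the logarithm handles the degenerate case $DA=1$, where $\log(DA)=0$ but $d_1=1$. Hence $2^{d}=O(A\log(2DA))$ and $|\mathcal H|=O(DA)\cdot O(A\log(2DA))=O(DA^2\log(2DA))$, as claimed.

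For the running time, the first application of \Cref{lem:separator-one} costs $2^{O(d_1)}=(DA)^{O(1)}$ and the second costs $2^{O(d)}=(A\log(2DA))^{O(1)}\le (DA)^{O(1)}$. The composition step is polynomial in the input and output sizes. The output consists of $|\mathcal H|=(DA)^{O(1)}$ matrices, each of size $d\times D$, so the total output size is also $(DA)^{O(1)}$. Composing two maps amounts to multiplying a $d\times d_1$ matrix by a $d_1\times D$ matrix.

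No step is a genuine obstacle. The only care needed is in the ceiling and maximum bookkeeping in the degenerate cases ($DA=1$, or $d_1=1$ so that $Ad_1=A$), which is what the factor $2$ inside $\log(2DA)$ absorbs.
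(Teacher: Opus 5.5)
Your proposal is correct and matches the paper's proof: apply the single-stage polynomial-evaluation separator twice (first with ambient dimension $D$, then with ambient dimension $d_1$), compose via the composition lemma, and conclude $|\mathcal H|\le 2^{d_1+d}$. Your explicit bounds $2^{d_1}\le 2DA$, $2^{d}\le 2Ad_1$ and $d_1\le\log_2(2DA)$, including the degenerate case $DA=1$, fill in estimates that the paper states without detail.
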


\begin{proof}
Use $d_1$ as defined in the statement. We have $2^{d_1}\ge DA$ and $2^d\ge d_1A$. By Lemma~\ref{lem:separator-one}, we construct a $(D,d_1,A)$-linear separator $\mathcal H_1$ with $|\mathcal H_1|\le2^{d_1}$ and a $(d_1,d,A)$-linear separator $\mathcal H_2$ with $|\mathcal H_2|\le2^d$. By Lemma~\ref{lem:separator-compose}, their concatenation is a $(D,d,A)$-linear separator $\mathcal H$ satisfying
\[
 |\mathcal H|\le2^{d_1+d}=O(DA^2\log(2DA)).
\]
The construction takes $(DA)^{O(1)}$ time.
\end{proof}

\subsection{The Reduction}

\begin{lemma}
\label{lem:vs-rlc}
There is an algorithm which, on input a $k$-\textnormal{\textsc{Vector Sum}} instance $X_1,\ldots,X_k$ over $\mathbb F_2^D$, and a $(D,d,A)$-linear separator $\mathcal H$, constructs a $k$-\textnormal{\textsc{RectLabelCover}} instance
\[
 I=\bigl((X_i)_{i\in[k]},V,\Sigma,(\pi_{v,i})_{v\in V,i\in[k]},(A_v)_{v\in V}\bigr)
\]
such that
\begin{itemize}
\item \textbf{(Size.)} $|V|=|\mathcal H|$ and $|\Sigma|=2^d$.
\item \textbf{(Completeness.)} If there exist $x_i\in X_i$ such that $\sum_{i=1}^k x_i=\mathbf{0}$, then $\operatorname{tuple\text{-}val}(I)=1$.
\item \textbf{(Soundness.)} If $\sum_{i=1}^k x_i\ne\mathbf{0}$ for every $(x_1,\ldots,x_k)\in X_1\times\cdots\times X_k$, then $\operatorname{tuple\text{-}val}(I)>A$.
\item \textbf{(Running time.)} The construction takes time polynomial in the explicit input, separator, and output lengths.
\end{itemize}
\end{lemma}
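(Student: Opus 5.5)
We use the construction already described in the proof overview. Index the tests by the separator, $V:=\mathcal H$, and set $\Sigma:=\mathbb F_2^d$, so that $|V|=|\mathcal H|$ and $|\Sigma|=2^d$. For $\psi\in\mathcal H$ and $i\in[k]$ we take $\pi_{\psi,i}(x):=\psi(x)$ for $x\in X_i$, and we define the acceptance table by
\[
  A_\psi:=\Bigl\{(a_1,\ldots,a_k)\in\Sigma^k:\textstyle\sum_{i=1}^k a_i=\mathbf 0\Bigr\}.
\]
The size claim is then immediate from these definitions.

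\textbf{Running time.} Each projection table is produced by evaluating $\psi$ on every vector in every $X_i$. Each evaluation is a $d\times D$ matrix–vector product. Each table $A_\psi$ is listed by enumerating the tuples in $\Sigma^{k-1}$ and completing the last coordinate as $a_k:=\sum_{i<k}a_i$. This takes time polynomial in $|\Sigma|^k$, which is dominated by the output length. The total time is therefore polynomial in the input, separator, and output lengths.

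\textbf{Completeness.} Suppose $\sum_i x_i=\mathbf 0$. Consider the single labeling $(x_1,\ldots,x_k)$. For every test $\psi$, linearity gives $\sum_i\psi(x_i)=\psi\bigl(\sum_i x_i\bigr)=\mathbf 0$, so this labeling covers $\psi$. Hence this one labeling covers every test. Since $V$ is nonempty, at least one labeling is needed, and therefore $\operatorname{tuple\text{-}val}(I)=1$. For this we must check that $\mathcal H\ne\varnothing$. This holds because a separator must handle the case $Z=\varnothing$, which requires some $\psi\in\mathcal H$ to exist.

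\textbf{Soundness.} This is the only step with any content. Suppose that $t\le A$ labelings $\lambda^{(1)},\ldots,\lambda^{(t)}$ cover every test. Define
\[
  Z:=\Bigl\{\textstyle\sum_i\lambda^{(j)}_i : j\in[t]\Bigr\}.
\]
Then $|Z|\le t\le A$. In the NO case every such sum is nonzero, so $Z\subseteq\mathbb F_2^D\setminus\{\mathbf 0\}$. By the separator property (Definition~\ref{def:separator}), some $\psi\in\mathcal H$ satisfies $\psi(z)\ne\mathbf 0$ for all $z\in Z$. For every $j$, linearity gives $\sum_i\pi_{\psi,i}(\lambda^{(j)}_i)=\psi\bigl(\sum_i\lambda^{(j)}_i\bigr)\ne\mathbf 0$. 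So none of the $t$ labelings covers the test $\psi$, a contradiction. Hence no collection of at most $A$ labelings covers all tests, and $\operatorname{tuple\text{-}val}(I)>A$. This bound also holds when no covering collection exists, since then $\operatorname{tuple\text{-}val}(I)=+\infty$ by the convention on empty minima.
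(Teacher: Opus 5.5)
Your proposal is correct and follows essentially the same route as the paper. It uses the same instance ($V=\mathcal H$, $\Sigma=\mathbb F_2^d$, $\pi_{\psi,i}=\psi$, zero-sum acceptance tables), completeness by linearity, and soundness by applying the separator to the set $Z$ of sums of the at most $A$ labelings; your explicit check that $\mathcal H\ne\varnothing$ (via $Z=\varnothing$), needed for $\operatorname{tuple\text{-}val}(I)=1$, is a small detail the paper leaves implicit.
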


\begin{proof}
Let $V:=\mathcal H$ and $\Sigma:=\mathbb F_2^d$. For every $\psi\in V$, define
\[
 \pi_{\psi,i}(x):=\psi(x)\quad(i\in[k]),\qquad
 A_\psi:=\left\{(\sigma_1,\ldots,\sigma_k)\in\Sigma^k:\sum_{i=1}^k\sigma_i=\mathbf{0}\right\}.
\]
Computing the projections from the matrices of the linear maps and enumerating the accepting configurations gives the stated running time. We next prove completeness and soundness.

\paragraph{Completeness.}
If $(x_1,\ldots,x_k)$ satisfies $\sum_{i=1}^k x_i=\mathbf{0}$, then linearity gives
\[
 \sum_{i=1}^k\pi_{\psi,i}(x_i)=\psi\left(\sum_{i=1}^k x_i\right)=\mathbf{0}
\]
for every test $\psi$. Hence labeling $x=(x_1,\ldots,x_k)$ covers all tests, and thus $\operatorname{tuple\text{-}val}(I)=1$ and $\operatorname{list\text{-}val}(I)=k$.

\paragraph{Soundness.}
Suppose that $\sum_{i=1}^k x_i\ne\mathbf{0}$ for every $(x_1,\ldots,x_k)$. Let $\Lambda\subseteq X_1\times\cdots\times X_k$ be any set of labelings of size at most $A$. We will prove that there must be some test $\psi\in V$ not covered by $\Lambda$. Let
\[
 Z:=\left\{\sum_{i=1}^k x_i:(x_1,\ldots,x_k)\in\Lambda\right\}.
\]
Then $Z\subseteq\mathbb F_2^D\setminus\{\mathbf{0}\}$ by assumption, and $|Z|\le|\Lambda|\le A$. By Definition~\ref{def:separator}, some $\psi\in\mathcal H$ satisfies $\psi(z)\ne\mathbf{0}$ for every $z\in Z$. Hence, for every $(x_1,\ldots,x_k)\in\Lambda$, we have
\[
 \sum_{i=1}^k\pi_{\psi,i}(x_i)=\psi\left(\sum_{i=1}^k x_i\right)\ne\mathbf{0}.
\]
Thus $\Lambda$ leaves the test $\psi$ uncovered. Since this holds for every set of labelings of size at most $A$, we have $\operatorname{tuple\text{-}val}(I)>A$.
\end{proof}

\section{Reduction from Rectangular Label Cover to Set Cover}\label{sec:local}

\subsection{Acceptance Family}

We use the following gadget to encode a test of a \textnormal{\textsc{RectLabelCover}} instance. The gadget ensures that every collection of at most $h$ sets covering the universe contains an accepting subcollection.

\begin{definition}[$h$-Local Acceptance Family]\label{def:local-family}
Let $\mathcal{A}\subseteq 2^\Omega$ be a family of accepting subsets on a ground set $\Omega$, and let $h\ge 1$ be an integer. A set system $(\{P_e\}_{e\in\Omega},\mathcal{U})$, where $P_e\subseteq\mathcal{U}$ for every $e\in\Omega$, is called an $h$-local acceptance family for $\mathcal{A}$ if it satisfies the following properties:
\begin{itemize}
  \item (Completeness.) For every $A\in\mathcal{A}$, $\bigcup_{e\in A}P_e=\mathcal{U}$.
  \item ($h$-Local Soundness.) For every $I\subseteq\Omega$ with $|I|\le h$, if $\bigcup_{e\in I}P_e=\mathcal{U}$, then $A\subseteq I$ for some $A\in\mathcal{A}$.
\end{itemize}
\end{definition}

\begin{lemma}\label{lem:local-family}
For every family $\mathcal{A}\subseteq 2^\Omega$ and every integer $h\ge 1$, there exists an $h$-local acceptance family $(\{P_e\}_{e\in\Omega},\mathcal{U})$ with $|\mathcal{U}|\le(|\Omega|+1)^h$. It can be constructed in $(|\Omega|+1)^{O(h)}\cdot(|\mathcal{A}|+1)^{O(1)}$ time.
\end{lemma}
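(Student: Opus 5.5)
We use the construction sketched in the proof overview. Define
\[
  \mathcal U:=\{X\subseteq\Omega:\ |X|\le h\text{ and no }A\in\mathcal A\text{ satisfies }A\subseteq X\},
  \qquad
  P_e:=\{X\in\mathcal U:\ e\notin X\}\quad(e\in\Omega).
\]
Clearly each $P_e\subseteq\mathcal U$. The plan has three parts: the two defining properties of Definition~\ref{def:local-family}, then the size and time bounds.

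\textbf{Completeness.} Fix $A\in\mathcal A$ and $X\in\mathcal U$. By the definition of $\mathcal U$, $A\not\subseteq X$, so there is some $e\in A\setminus X$. Then $X\in P_e$, and hence $\bigcup_{e\in A}P_e=\mathcal U$.

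\textbf{$h$-local soundness.} We argue by contraposition. Let $I\subseteq\Omega$ with $|I|\le h$, and suppose no $A\in\mathcal A$ satisfies $A\subseteq I$. Then $I$ itself lies in $\mathcal U$. For every $e\in I$ we have $e\in I$, so $I\notin P_e$. Thus the element $I$ of $\mathcal U$ is not covered by $\bigcup_{e\in I}P_e$. One edge case needs checking: if $\varnothing\in\mathcal A$, then every $I$ contains a member of $\mathcal A$ and soundness holds trivially; the argument above handles this uniformly.

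\textbf{Size and time.} Every element of $\mathcal U$ is a subset of $\Omega$ of size at most $h$. The number of such subsets is $\sum_{j\le h}\binom{|\Omega|}{j}\le(|\Omega|+1)^h$. To see this, map each sequence in $(\Omega\cup\{\bot\})^h$ to its set of non-$\bot$ entries; this map is onto the subsets of size at most $h$. For the construction, enumerate all subsets $X$ of size at most $h$ in $(|\Omega|+1)^{O(h)}$ time. For each $X$, test every $A\in\mathcal A$ for inclusion in $X$, at a cost of $\mathrm{poly}(|\Omega|)$ per test, which gives the factor $(|\mathcal A|+1)^{O(1)}$. Finally, list the sets $P_e$ explicitly, which costs $|\Omega|\cdot|\mathcal U|$ membership checks.

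There is no real obstacle here. The only points requiring care are the counting bound $(|\Omega|+1)^h$ and making sure the soundness argument uses $I$ itself as the uncovered universe element.
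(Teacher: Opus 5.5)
Your proposal is correct and follows the paper's proof essentially verbatim: the same universe of small non-accepting subsets, the same sets $P_e$, the same completeness argument, and the same contrapositive soundness argument using $I$ itself as the uncovered element. Your surjection from $(\Omega\cup\{\bot\})^h$ and the remark about $\varnothing\in\mathcal A$ are harmless additions that make explicit what the paper leaves implicit.
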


\begin{proof}
Define
\[
  \mathcal{U}:=\{X\subseteq\Omega:|X|\le h\text{ and }A\nsubseteq X\text{ for every }A\in\mathcal{A}\}.
\]
For every $e\in\Omega$, define
\[
  P_e:=\{X\in\mathcal{U}:e\notin X\}.
\]
There are at most $(|\Omega|+1)^h$ subsets of $\Omega$ of size at most $h$. We can construct the set system in $(|\Omega|+1)^{O(h)}\cdot(|\mathcal{A}|+1)^{O(1)}$ time by enumerating these subsets and checking the displayed condition.

\paragraph{Completeness.}
Fix any $A\in\mathcal{A}$. For every $X\in\mathcal{U}$, we have $A\nsubseteq X$ by the definition of $\mathcal{U}$; hence, there exists an element $e\in A\setminus X$. It follows that $X\in P_e$, and therefore $X\in\bigcup_{e\in A}P_e$. Thus $\bigcup_{e\in A}P_e=\mathcal{U}$.

\paragraph{$h$-Local Soundness.}
We prove the contrapositive. Let $I\subseteq\Omega$ satisfy $|I|\le h$, and suppose that $A\nsubseteq I$ for every $A\in\mathcal{A}$. By the definition of $\mathcal{U}$, $I\in\mathcal{U}$. By the definition of $P_e$, $I\notin P_e$ for every $e\in I$. Therefore, $I\notin\bigcup_{e\in I}P_e$. Since $I\in\mathcal{U}$, we have $\bigcup_{e\in I}P_e\ne\mathcal{U}$.
\end{proof}

\subsection{The Reduction}

\begin{lemma}\label{lem:rlc-sc}
There is an algorithm that, on input a $(k,h)$-Gap \textnormal{\textsc{RectLabelCover}} instance
\[
  I=(\{W_i\}_{i\in[k]},V,\Sigma,
  \{\pi_{v,i}\}_{v\in V,i\in[k]},\{A_v\}_{v\in V}),
\]
outputs a \textnormal{\textsc{Set Cover}} instance $\Gamma=(\mathcal{S},U)$ such that
\begin{itemize}
  \item (Size.) $|U|\le |V|(k|\Sigma|+1)^h$ and $|\mathcal{S}|=\sum_{i\in[k]}|W_i|$.
  \item (Completeness.) If $\operatorname{list\text{-}val}(I)=k$, then $\operatorname{opt}(\Gamma)\le k$.
  \item (Soundness.) If $\operatorname{list\text{-}val}(I)>h$, then $\operatorname{opt}(\Gamma)>h$.
  \item (Running Time.) The instance can be constructed in $|I|^{O(1)}(k|\Sigma|+1)^{O(h)}$ time.
\end{itemize}
\end{lemma}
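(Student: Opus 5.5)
The plan is to apply Lemma~\ref{lem:local-family} once per test and take a disjoint union of the resulting universes, with one set per label. For each test $v\in V$, let $\Omega:=[k]\times\Sigma$ and let
\[
 \mathcal A_v:=\bigl\{\{(1,a_1),\ldots,(k,a_k)\}:(a_1,\ldots,a_k)\in A_v\bigr\}\subseteq 2^{\Omega}.
\]
Lemma~\ref{lem:local-family} then gives an $h$-local acceptance family $(\{P^v_e\}_{e\in\Omega},\mathcal U_v)$ with $|\mathcal U_v|\le(k|\Sigma|+1)^h$. Set $U:=\bigsqcup_{v\in V}\{v\}\times\mathcal U_v$. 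For each $i\in[k]$ and $\lambda\in W_i$, define
\[
 S_{i,\lambda}:=\bigcup_{v\in V}\{v\}\times P^v_{(i,\pi_{v,i}(\lambda))}.
\]
This gives the size bounds $|U|\le|V|(k|\Sigma|+1)^h$ and $|\mathcal S|=\sum_i|W_i|$ immediately. For the running time, the $|V|$ calls to Lemma~\ref{lem:local-family} cost $(k|\Sigma|+1)^{O(h)}(|A_v|+1)^{O(1)}$ each. Since $|A_v|$ is part of the explicit input, the total is $|I|^{O(1)}(k|\Sigma|+1)^{O(h)}$, and writing down the sets costs the same up to polynomial factors.

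For completeness, suppose $\operatorname{list\text{-}val}(I)=k$. Then a single labeling $(\lambda_1,\ldots,\lambda_k)$ covers every test. Consider the cover $\{S_{i,\lambda_i}\}$. In block $v$, these sets restrict to $P^v_e$ for $e$ ranging over the configuration $\{(i,\pi_{v,i}(\lambda_i))\}$, which lies in $\mathcal A_v$. By the completeness property of the acceptance family, their union is $\mathcal U_v$, so the $k$ sets cover $U$.

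For soundness, I will argue by contrapositive. Suppose some collection $C$ of at most $h$ sets covers $U$. Define $L_i:=\{\lambda:S_{i,\lambda}\in C\}$, so $|L|=|C|\le h$. Fix a test $v$ and let $J_v:=\{(i,\pi_{v,i}(\lambda)):\lambda\in L_i\}\subseteq\Omega$. Then $|J_v|\le h$, and since $C$ covers block $v$, we have $\bigcup_{e\in J_v}P^v_e=\mathcal U_v$. By $h$-local soundness, some member of $\mathcal A_v$ is contained in $J_v$. That member is a tuple $(a_1,\ldots,a_k)\in A_v$ with each $a_i\in\pi_{v,i}(L_i)$, so $L$ covers $v$. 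Since $v$ was arbitrary, $\operatorname{list\text{-}val}(I)\le h$, which gives the contrapositive.

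There is one subtle point. A configuration in $\mathcal A_v$ has exactly one element tagged with each group $i$, and the tags are what let me read a subset of $J_v$ back as a tuple drawn from the product of the projected lists; the tagging by $[k]$ is essential here. Beyond this, I expect no real obstacle. The only care needed is the running-time accounting for enumerating $\mathcal A_v$.
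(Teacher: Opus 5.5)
Your proposal is correct and matches the paper's proof essentially step for step. Both use one $h$-local acceptance gadget per test on the tagged ground set $[k]\times\Sigma$ and take a disjoint union of the gadget universes. Both define one set $S_{i,\lambda}$ per label from the pieces $P^v_{(i,\pi_{v,i}(\lambda))}$, obtain completeness from the gadget's completeness property, and prove soundness by reading a list labeling of size at most $h$ off a small cover and applying $h$-local soundness. The only difference is cosmetic: you invoke Lemma~\ref{lem:local-family} abstractly, while the paper writes out its explicit construction $U_v$, $P_v[i,\sigma]$ inside the proof.
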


\begin{proof}
For every test $v\in V$, define the corresponding family of accepting subsets by
\[
  \mathcal A_v:=
  \left\{
    \{(1,\sigma_1),\ldots,(k,\sigma_k)\}:
    (\sigma_1,\ldots,\sigma_k)\in A_v
  \right\}.
\]
Applying the construction in the proof of
Lemma~\ref{lem:local-family} to $\mathcal A_v$, we define
\[
  U_v:=
  \left\{
    X\subseteq[k]\times\Sigma:
    |X|\le h
    \text{ and } A\nsubseteq X
    \text{ for every } A\in\mathcal A_v
  \right\}.
\]
For each $i\in[k]$ and $\sigma\in\Sigma$, let
\[
  P_v[i,\sigma]:=
  \{X\in U_v:(i,\sigma)\notin X\}.
\]
Thus $P_v[i,\sigma]$ covers exactly those universe elements
$X$ that do not contain the message $(i,\sigma)$.
By Lemma~\ref{lem:local-family}, the set system
$(\{P_v[i,\sigma]\}_{i\in[k],\,\sigma\in\Sigma},U_v)$
is an $h$-local acceptance family for $\mathcal A_v$.

We construct the \textnormal{\textsc{Set Cover}} instance $\Gamma=(\mathcal{S},U)$ as follows:
\begin{itemize}
  \item Let the universe be $U:=\biguplus_{v\in V}U_v$.
  \item For every $i\in[k]$ and $\lambda\in W_i$, introduce a set
  \[
    S_{i,\lambda}:=\biguplus_{v\in V}P_v[i,\pi_{v,i}(\lambda)].
  \]
\end{itemize}
Let $\mathcal{S}$ be the family of these sets indexed by $(i,\lambda)$; identical subsets retain distinct indices.

\paragraph{Size and Running Time.}
By definition, $|\mathcal{S}|=\sum_{i\in[k]}|W_i|$. By Lemma~\ref{lem:local-family}, the gadget for each $v\in V$ satisfies $|U_v|\le(k|\Sigma|+1)^h$ and can be constructed in $(k|\Sigma|+1)^{O(h)}(|A_v|+1)^{O(1)}$ time. Hence $|U|\le |V|(k|\Sigma|+1)^h$, and the construction takes $|I|^{O(1)}(k|\Sigma|+1)^{O(h)}$ time.

\paragraph{Completeness.}
Suppose $\operatorname{list\text{-}val}(I)=k$. Then there exists a labeling $\lambda=(\lambda_1,\ldots,\lambda_k)$ that covers every test. We claim that $S_{1,\lambda_1},\ldots,S_{k,\lambda_k}$ cover $U$. Fix any $v\in V$ and let
\[
  a_v:=(\pi_{v,1}(\lambda_1),\ldots,\pi_{v,k}(\lambda_k)).
\]
Since $\lambda$ covers $v$, we have $a_v\in A_v$, and thus $\{(i,\pi_{v,i}(\lambda_i)):i\in[k]\}\in\mathcal{A}_v$. By the completeness property in Definition~\ref{def:local-family},
\[
  \bigcup_{i\in[k]}P_v[i,\pi_{v,i}(\lambda_i)]=U_v.
\]
Hence the sets $S_{1,\lambda_1},\ldots,S_{k,\lambda_k}$ cover every block $U_v$ and therefore cover $U$.

\paragraph{Soundness.}
Assume for the sake of contradiction that $\operatorname{opt}(\Gamma)\le h$. Then there exists an index set $C\subseteq\{(i,\lambda):i\in[k],\lambda\in W_i\}$ with $|C|\le h$ such that $\{S_{i,\lambda}\}_{(i,\lambda)\in C}$ covers $U$. We construct a list labeling $L:=(L_1,\ldots,L_k)$ of $I$, where $L_i:=\{\lambda\in W_i:(i,\lambda)\in C\}$. By construction, $|L|=|C|\le h$.

Fix $v\in V$, and let $I_v:=\{(i,\pi_{v,i}(\lambda)):i\in[k],\lambda\in L_i\}$. Since $\{S_{i,\lambda}\}_{(i,\lambda)\in C}$ covers $U_v$, we have
\[
  \bigcup_{e\in I_v}P_v[e]=U_v.
\]
Moreover, $|I_v|\le|L|\le h$. Hence, by the $h$-local soundness property in Definition~\ref{def:local-family}, $\{(i,a_i):i\in[k]\}\subseteq I_v$ for some $a=(a_1,\ldots,a_k)\in A_v$. Therefore, for every $i\in[k]$, there exists $\lambda_i\in L_i$ such that $\pi_{v,i}(\lambda_i)=a_i$. Consequently, $(\lambda_1,\ldots,\lambda_k)$ covers $v$, and thus $L$ covers $v$. Hence $\operatorname{list\text{-}val}(I)\le|L|\le h$, which contradicts the soundness assumption.
\end{proof}

\subsection{Inapproximability of Set Cover}

The next lemma makes the universe budget and the parameter-dependent size threshold explicit. The threshold ensures a nontrivial gap and absorbs the fixed constants in the size bounds.

\begin{lemma}
\label{lem:budget}
Fix constants $a>0$, $b\ge1$, and $C_0\ge1$. There are constants $c>0$, $C\ge1$ and a computable threshold $N_0(k)$ with the following property. Given a grouped binary $k$-\textnormal{\textsc{Vector Sum}} instance with $k\ge2$, $N:=\sum_i|X_i|\ge N_0(k)$, and $D\le C_0k\log N$, and an integer budget
\[
 (\log N)^{1+a}\le B\le N^b,
\]
one can construct in time $N^C$ a \textnormal{\textsc{Set Cover}} instance $\Gamma$ with exactly $N$ indexed sets and exactly $B$ universe elements such that
\begin{itemize}
\item if the source is a YES instance, then $\operatorname{opt}(\Gamma)\le k$;
\item if the source is a NO instance, then $\operatorname{opt}(\Gamma)>c\log B/(k\log\log B)$.
\end{itemize}
The constants and the threshold depend only on $a,b,C_0$; the exponent $C$ is independent of $k$. We may take $c=a/(256(1+a))$ and choose the threshold so that $c\log B/(k^2\log\log B)\ge1$.
\end{lemma}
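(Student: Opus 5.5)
We compose Lemma~\ref{lem:separator-two}, Lemma~\ref{lem:vs-rlc} and Lemma~\ref{lem:rlc-sc}, with parameters chosen so that the universe stays below the budget $B$. Then we pad up to exactly $B$ elements.

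Set $h:=\lfloor c'\log(B/\log N)/(k\log\log B)\rfloor$ for a small constant $c'$, and $A:=\lceil(h/k)^k\rceil$. Apply Lemma~\ref{lem:separator-two} with $D\le C_0k\log N$ and this $A$. Feed the resulting $(D,d,A)$-separator into Lemma~\ref{lem:vs-rlc}. This gives an instance $I$ with $|V|=O(DA^2\log(2DA))$ and $|\Sigma|=2^d=O(A\log(2DA))$, and the following gap. In the YES case $\operatorname{list\text{-}val}(I)=k$. In the NO case $\operatorname{tuple\text{-}val}(I)>A\ge(h/k)^k$, so Lemma~\ref{lem:list-tuple} gives
\[
\operatorname{list\text{-}val}(I)\ge k\operatorname{tuple\text{-}val}(I)^{1/k}>h.
\]
(If we want strictness with margin, take $A=\lceil(h/k)^k\rceil+1$.) Lemma~\ref{lem:rlc-sc} then yields $\Gamma$ with $N$ sets, $\operatorname{opt}\le k$ in the YES case and $\operatorname{opt}>h$ in the NO case. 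Since $\log(B/\log N)\ge\frac{a}{1+a}\log B$, we get $h\ge c\log B/(k\log\log B)$ once $h\ge 2$ or so, which the threshold $N_0(k)$ arranges.

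The main step is the size accounting $\log|U|\le\log|V|+h\log(k|\Sigma|+1)\le\log B$. We have $\log A\le k\log(h/k)+1=O(k\log\log B)$, since $h\le\log B$. Also $\log D=O(\log k+\log\log N)$ and $\log\log N\le\log\log B$. Hence
\[
\log(k|\Sigma|+1)=O(k\log\log B+\log\log N+\log k)=O(k\log\log B)
\]
for $N\ge N_0(k)$, using $\log k\le\log\log\log N$-type bounds from the threshold. So $h\log(k|\Sigma|+1)\le O(c')\log(B/\log N)$. Meanwhile $\log|V|\le\log D+2\log A+O(\log\log)\le\log\log N+O(k\log\log B)$.

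Together these give
\[
\log|U|\le\log\log N+O(c')\log(B/\log N)+O(k\log\log B).
\]
This is at most $\log B=\log\log N+\log(B/\log N)$, provided $c'$ is small and $k\log\log B=o(\log(B/\log N))$. The latter holds because $\log(B/\log N)\ge\frac{a}{1+a}\log B\ge a\log\log N$ and $N\ge N_0(k)$. This is where the explicit $c=a/(256(1+a))$ comes from: the constants from the separator and the alphabet are absorbed by dividing by a large numeric factor. Careful tracking of these constants is the obstacle I expect.

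Finally, pad $U$ to exactly $B$ elements by duplicating an element of $U$ into every set containing it. This preserves $\operatorname{opt}$ because $U$ is nonempty, as $\mathcal U$ contains $\varnothing$ when no $A_v$ contains the empty set.

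For the running time: the separator costs $(DA)^{O(1)}$, and Lemma~\ref{lem:rlc-sc} costs $|I|^{O(1)}(k|\Sigma|+1)^{O(h)}$. The second factor is $2^{O(\log B)}\le N^{O(b)}$ by the same accounting. The first factor requires care, because $|I|$ includes $|\Sigma|^k$ from the explicit acceptance tables. However, $|\Sigma|^k=2^{O(k^2\log\log B)}$, which is $N^{o(1)}$ for $N\ge N_0(k)$. The total is therefore $N^C$ with $C$ depending only on $a,b,C_0$. We also require $c\log B/(k^2\log\log B)\ge1$, which holds for large $N$ because $B\ge(\log N)^{1+a}$; this fixes the final choice of $N_0(k)$.
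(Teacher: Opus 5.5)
Your plan is the paper's proof. It uses the same $h=\lfloor c'r/(kT)\rfloor$ with $r=\log(B/\log N)$ and $T=\log\log B$; the paper takes $c'=1/64$. It follows the same chain Lemma~\ref{lem:separator-two} $\to$ Lemma~\ref{lem:vs-rlc} $\to$ Lemma~\ref{lem:list-tuple} $\to$ Lemma~\ref{lem:rlc-sc}, the same split $\log B=\log\log N+r$, and the same padding.

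\textbf{The gap.} Your alphabet estimate rests on a false inequality. The bound $\log\log N\le\log\log B$ fails exactly in the regime the lemma targets. For $B=\lceil(\log N)^{1+a}\rceil$ one has $\log\log B=\log\log\log N+O(1)$. The term this inequality is meant to absorb would also be fatal. If $\log(k|\Sigma|+1)$ really contained $\log\log N$, then $h\log(k|\Sigma|+1)$ would be of order $c'r\log\log N/(k\log\log\log N)$. That exceeds $r$ for large $N$, so $|U_0|\le B$ would fail. A bound of the form $\log|\Sigma|\approx\log A+\log D$ is what a single-stage separator gives (Lemma~\ref{lem:separator-one} with $2^{d_1}\approx DA$). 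Your accounting therefore throws away exactly what the second stage of Lemma~\ref{lem:separator-two} buys.

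\textbf{The repair.} The alphabet $|\Sigma|=O(A\log(2DA))$ depends on $D$ only through $\log\log(2DA)$. Use three facts: $\log D\le\log C_0+\log k+\log\log N$; the correct consequence $\log\log N\le\log B/(1+a)$ of $B\ge(\log N)^{1+a}$; and $\log A\le1+kT$. Together they give $\log(2DA)=O(k\log B)$, hence $\log\log(2DA)\le\log k+T+O(1)$. It follows that $\log(k|\Sigma|+1)\le kT+2\log k+T+O(1)\le8kT$ for $N\ge N_0(k)$. The genuine $\log\log N$ coming from $\log D$ appears only in $\log|V|$, where you already cancel it correctly against $\log B=\log\log N+r$. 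This leaves $\log(|V|/\log N)\le2kT+2\log k+T+O(1)\le8kT$.

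\textbf{Constant tracking.} The same bounds settle the constants you flagged as the obstacle. Take $c'=1/64$ and choose the threshold so that $r\ge128k^2T$. Then $\log|U_0|\le\log\log N+8kT+8khT\le\log\log N+r/(16k)+r/8\le\log B$. The floor then loses at most a factor $2$, so $h\ge r/(128kT)\ge\frac{a}{1+a}\log B/(128k\log\log B)$. This gives the stated $c=a/(256(1+a))$.

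The rest of your proposal is fine: the strict NO-case gap via Lemma~\ref{lem:list-tuple}, nonemptiness of the universe (every accepting configuration has $k\ge1$ elements, so $\varnothing\in U_v$), and the running-time bound.
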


\begin{proof}
Set
\[
 L:=\log B,\qquad T:=\log L,\qquad r:=\log(B/\log N),\qquad \beta:=\frac{a}{1+a}.
\]
The budget assumption gives $r\ge\beta L$. Take
\begin{equation}
\label{eq:budget-parameters}
 h:=\left\lfloor\frac{r}{64kT}\right\rfloor,\qquad
 A:=\left\lceil(h/k)^k\right\rceil.
\end{equation}
For sufficiently large $N_0(k)$, uniformly over all admissible $B$, we have $T\ge1$, $k\le L$, $h\ge2k$, and
\begin{equation}
\label{eq:budget-threshold}
 r\ge128k^2T.
\end{equation}
These conditions follow because $L/\log L$ tends to infinity and $L\ge(1+a)\log\log N$. All bounds below are explicit elementary inequalities, so a sufficiently large computable threshold can be fixed in advance.

Apply Lemma~\ref{lem:separator-two} and then Lemma~\ref{lem:vs-rlc}. We obtain an instance with
\begin{equation}
\label{eq:budget-rlc-size}
 |V|=O(DA^2\log(2DA)),\qquad q:=|\Sigma|=O(A\log(2DA)).
\end{equation}
In the YES case, $\operatorname{list\text{-}val}(I)=k$. In the NO case, $\operatorname{tuple\text{-}val}(I)>A$, so Lemma~\ref{lem:list-tuple} gives the strict bound
\[
 \operatorname{list\text{-}val}(I)\ge k\operatorname{tuple\text{-}val}(I)^{1/k}>kA^{1/k}\ge h.
\]
Thus Lemma~\ref{lem:rlc-sc} produces an instance $\Gamma_0$ with $N$ indexed sets and universe $U_0$ satisfying
\[
 |U_0|\le|V|(kq+1)^h.
\]
Its optimum is at most $k$ in the YES case and greater than $h$ in the NO case.

\paragraph{Universe bound.}
We use slack to account for rounding and the constants in~\eqref{eq:budget-rlc-size}. Since $h\le L$,
\[
 \log A\le1+k\log h\le1+kT.
\]
Also $D\le C_0k\log N$ and $\log\log N\le L/(1+a)$, so $\log(2DA)=O(kL)$. Increasing $N_0(k)$ if necessary, these estimates imply
\begin{equation}
\label{eq:budget-log-bounds}
 \log(kq+1)\le8kT,\qquad
 \log\left(\frac{|V|}{\log N}\right)\le8kT.
\end{equation}
Indeed, the first left-hand side is at most $kT+2\log k+T+O(1)$, and the second is at most $2kT+2\log k+T+O(1)$; use $\log k\le T$ and $k\ge2$. Combining Eqs.~\eqref{eq:budget-parameters}, \eqref{eq:budget-threshold} and~\eqref{eq:budget-log-bounds} gives
\begin{align*}
 \log|U_0|&\le\log\log N+8kT+8khT\\
 &\le\log\log N+\frac{r}{16k}+\frac{r}{8}\\
 &\le\log\log N+\frac{r}{4}\le\log B.
\end{align*}
Hence $|U_0|\le B$.

\paragraph{Padding and the gap.}
The universe $U_0$ is nonempty: in every local gadget the empty subset of $[k]\times\Sigma$ is a rejecting collection, since accepting configurations have size $k\ge1$. Choose any element of $U_0$ and add copies with exactly the same incidences until the universe has size exactly $B$. A family covers the padded universe if and only if it covers $U_0$, so this operation preserves the optimum. Finally,~\eqref{eq:budget-threshold} implies that the quantity rounded down in~\eqref{eq:budget-parameters} is at least two. Therefore
\[
 h\ge\frac{r}{128kT}\ge\frac{\beta\log B}{128k\log\log B}\ge\frac{c\log B}{k\log\log B}
\]
for $c:=\beta/256$. This proves the gap claim.

\paragraph{Construction time.}
Since $h\ge k$, the same budget estimate bounds the explicit acceptance tables:
\[
 |V|q^k\le|V|(kq+1)^h\le B.
\]
The projection and acceptance representations therefore have bit length polynomial in $N,B,D,k$. The separator is constructible in $(DA)^{O(1)}$ time. For $N\ge N_0(k)$, we have $k\le\log B\le b\log N$ and $\log A\le1+k\log h\le1+khT\le1+\log B/64$. Thus $DA\le N^{O(1)}$. The local gadget construction takes $|I|^{O(1)}(kq+1)^{O(h)}$, and writing the padded incidence table takes polynomial time in $NB$. As $B\le N^b$, all these costs are at most $N^C$ for a constant $C$ independent of $k$.
\end{proof}

\begin{proof}[Proof of Theorem~\ref{thm:w1}]
Fix $\alpha>0$ and choose a positive rational $a<\alpha$. Use the polynomial-time computable integer budget
\[
 B(N):=\left\lceil(\lceil\log N\rceil)^{1+a}\right\rceil.
\]
For all sufficiently large $N$, it satisfies $(\log N)^{1+a}\le B(N)\le N^b$ for some fixed $b$, and $B(N)=O((\log N)^{1+\alpha})$. Apply Lemma~\ref{lem:budget} to the hard source instances in Lemma~\ref{lem:vs-w1}. For $N\ge N_0(k)$, it gives $m=N$, $n=B(N)$, and exactly the gap in the theorem, with output parameter $k$.

For $N<N_0(k)$, decide the source instance by exhaustive search in time bounded by a computable function of $k$, since $D\le C_0k\log N_0(k)$. Output a fixed YES or NO instance of the target promise problem according to the answer. Such instances exist: fix $k_*=2$ and a sufficiently large constant $m_*$, and put $n_*:=B(m_*)$, with $\gamma_{k_*}(n_*)\ge1$ and $n_*>k_*\gamma_{k_*}(n_*)$. For YES, use $m_*$ copies of the full universe; for NO, use its $n_*$ singleton sets and $m_*-n_*$ empty sets. These instances have optima $1$ and $n_*$, respectively. The whole reduction has running time $g(k)N^{O(1)}$ and output parameter at most $\max\{k,2\}$, as required for FPT many-one hardness.
\end{proof}

\begin{proof}[Proof of Theorem~\ref{thm:eth}]
Use the same rational $a<\alpha$ and budget $B(N)$ as above. Let $\eta>0$ be the constant in Lemma~\ref{lem:vs-eth}, and let $C\ge1$ bound both the construction time and the explicit output length by $N^C$ for $N\ge N_0(k)$. Set $\varepsilon:=\eta/(4C)$.

Suppose the gap problem had a deterministic algorithm with running time $f(k)|\Gamma|^{\varepsilon k}$. On a source instance with $N\ge N_0(k)$, construction followed by this algorithm takes
\[
 N^C+f(k)N^{C\varepsilon k}=N^C+f(k)N^{\eta k/4}.
\]
For $k\ge2C/\eta$, this is at most $f'(k)N^{\eta k}$ for a computable $f'$. If $N<N_0(k)$, exhaustive search has cost bounded by a function of $k$ and can be absorbed into $f'$. This would solve the source problem within the forbidden time bound for all sufficiently large even $k$, contradicting the strengthened form of Lemma~\ref{lem:vs-eth}.
\end{proof}

\section{Near-Logarithmic Hardness under SETH}\label{sec:seth}

We prove Theorem~\ref{thm:seth} by applying the local acceptance construction to a communication protocol for disjointness. The reduction retains $k$ sets in the YES solution and produces a family of instances, one for each advice string. We account for the construction and solution of every instance in the family.

\subsection{Disjointness and its communication protocol}

For $x_1,\ldots,x_k\in\{0,1\}^D$, define
\[
\operatorname{DISJ}_{D,k}(x_1,\ldots,x_k)=1
\quad\Longleftrightarrow\quad
\sum_{j=1}^{D}\prod_{i=1}^{k}(x_i)_j=0.
\]
The sum is over the integers. The associated selection problem gives $k$ groups $A_1,\ldots,A_k\subseteq\{0,1\}^D$ and asks whether their product contains a tuple satisfying this predicate. In this section, $N$ denotes the number of candidates in each group. Groups of smaller size can be padded with indexed copies.

We use the following two facts from \cite[full version, Proposition~4.7 and Theorem~6.1]{DBLP:journals/jacm/SLM19}.

\begin{lemma}[SETH hardness of disjointness selection]\label{lem:disj-hard}
Assume SETH. For every fixed integer $k\ge 2$ and constant $\xi\in(0,1)$, there is a constant $C_0$ such that disjointness selection with $D\le C_0\log N$ has no deterministic $O(N^{k-\xi})$-time algorithm.
\end{lemma}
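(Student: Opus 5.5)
This lemma is cited from \cite[full version, Proposition~4.7]{DBLP:journals/jacm/SLM19}, so the plan is to recover it through the standard SETH-to-$k$-OV reduction and record the parameter choices. Fix $k\ge2$ and $\xi\in(0,1)$. Set $\zeta:=\xi/(2k)$ and let $d$ be the integer given by Hypothesis~\ref{hyp:seth} for this $\zeta$. We now want to show that a fast disjointness-selection algorithm refutes SETH on $d$CNF formulas.

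\textbf{Step 1: sparsify.} Take a $d$CNF formula on $v$ variables. By the sparsification lemma~\cite{DBLP:journals/jcss/ImpagliazzoPZ01}, with a parameter $\epsilon'$ chosen small relative to $\zeta$, we write it as a disjunction of $2^{\epsilon' v}$ formulas. Each of these has at most $c(d,\epsilon')v$ clauses, and the decomposition takes $2^{\epsilon' v}\mathrm{poly}(v)$ time. We then handle each sparse formula separately.

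\textbf{Step 2: split the variables.} Partition the variables into $k$ blocks of size about $v/k$. For each block $i$ and each partial assignment $\tau$ of that block, form a vector $x_{i,\tau}\in\{0,1\}^D$ with one coordinate per clause. Coordinate $j$ is $1$ exactly when $\tau$ does not satisfy clause $j$ using the variables of block $i$. A full assignment $(\tau_1,\dots,\tau_k)$ satisfies the formula if and only if no coordinate is $1$ in all $k$ vectors, which is exactly $\operatorname{DISJ}_{D,k}=1$. This gives $N=2^{\lceil v/k\rceil}$ candidates per group, padding with indexed copies if the blocks are unequal. It also gives $D\le c(d,\epsilon')v\le C_0\log N$ with $C_0:=k\,c(d,\epsilon')+1$, which depends only on $k$ and $\xi$.

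\textbf{Step 3: running-time accounting.} An $O(N^{k-\xi})$ algorithm solves each sparse formula in time
\[
2^{(v/k+1)(k-\xi)}+\mathrm{poly}(v)2^{v/k}=O(2^{(1-\xi/k)v}).
\]
Multiplying by the $2^{\epsilon' v}$ sparse formulas, and taking $\epsilon'<\xi/(2k)$, the total is $O(2^{(1-\zeta)v})$. This contradicts the choice of $d$.

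The only delicate point is the order in which constants are chosen. $C_0$ depends on the sparsification constant, which depends on $d$ and hence on $\xi$. The statement allows this, since $C_0$ may depend on $k$ and $\xi$, so I do not expect a real obstacle here.
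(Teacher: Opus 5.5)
Your proposal is correct: it is the standard sparsification plus split-and-list reduction (due to Williams) from $d$CNF-SAT to $k$-wise orthogonal vectors. The order of constants ($\zeta$ from $\xi,k$, then $d$, then $\epsilon'$, then $C_0$) is valid, and counting indexed partial assignments gives $N=2^{\lceil v/k\rceil}$ and hence $D\le C_0\log N$. The paper gives no proof of its own and simply imports the statement from \cite[full version, Proposition~4.7]{DBLP:journals/jacm/SLM19}, which rests on this same standard reduction.
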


\begin{lemma}[Simultaneous-message protocol]\label{lem:disj-protocol}
For fixed integers $k,a\ge 1$, the predicate $\operatorname{DISJ}_{D,k}$ has a protocol in which a referee receives at most $\lceil D/a\rceil$ advice bits, and each of the $k$ players sends $\ell_0=\operatorname{poly}(k,a)$ bits determined by its input and a shared uniform test from a set $R_0$ of size $D^{O(1)}$. A true tuple has an advice string under which every test accepts. For a false tuple, every advice string has acceptance probability at most $1/2$. For fixed $k,a$, the messages and the referee's decision are computable in time polynomial in $D$.
\end{lemma}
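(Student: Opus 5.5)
The plan is to build the protocol from a \emph{multiplication-friendly} linear code over a constant-size field, following the algebraic-geometric approach of~\cite{DBLP:journals/jacm/SLM19}. Write $T:=\lceil D/a\rceil$ and index coordinates of each input by $[a]\times[T]$, padding with zeros; padding does not change $\operatorname{DISJ}$. For $t\in[T]$ set
\[
y_t:=\sum_{j=1}^{a}\prod_{i=1}^{k}x_i(j,t)\in\{0,\ldots,a\}.
\]
Then $\operatorname{DISJ}_{D,k}=1$ iff $y_t=0$ for every $t$. Fix a prime $p>a$ and a field $\mathbb F_q$ of characteristic $p$ with $q=\operatorname{poly}(k,a)$. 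Reducing modulo $p$ then loses nothing, since $y_t\in[0,a]$ with $a<p$, so $y_t=0$ iff $y_t\equiv 0\pmod p$.

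\textbf{Step 1 (the code).} I would take an algebraic-geometric code from a Garcia--Stichtenoth-type tower over $\mathbb F_q$. Let $G$ be a divisor and $P_1,\ldots,P_n$ rational points outside its support, with $\deg G\approx T+g$, where $g$ is the genus. The encoding is $E\colon\mathbb F_q^T\hookrightarrow L(G)$ followed by evaluation at the $P_r$. Coordinatewise products of $k$ codewords lie in the evaluation code of $L(kG)$, which is linear. Its nonzero codewords vanish at most $\deg(kG)$ points. Since $n/g\to\sqrt q-1$ in the tower, choosing $q$ polynomially large in $k$ lets us take $n=O(kT)$ with $k\deg G\le n/2$. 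We also need an injective linear map $F$ from $\mathbb F_q^T$ into $L(kG)$ such that $\prod_i E(u_i)$, summed appropriately, equals the image of the coordinatewise product $\prod_i u_i$. I would arrange this by using $T$ further rational points $Q_1,\ldots,Q_T$ as interpolation points: encode $u$ as the function $f\in L(G)$ with $f(Q_t)=u_t$. Then the product function takes value $\prod_i u_{i,t}$ at $Q_t$.

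\textbf{Step 2 (the protocol).} Player $i$ forms $f_{i,j}\in L(G)$ interpolating $x_i(j,\cdot)$ for each $j\in[a]$. The shared test picks $r\in[n]$, so $|R_0|=O(kD)=D^{O(1)}$. Player $i$ sends $(f_{i,j}(P_r))_{j\in[a]}$, which is $a\lceil\log q\rceil=\operatorname{poly}(k,a)$ bits. The referee computes
\[
w(P_r):=\sum_j\prod_i f_{i,j}(P_r),
\]
where $w\in L(kG)$, and accepts iff this value is $0$. No advice is actually needed, so the advice bound holds trivially; alternatively one can ignore the advice string.

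\textbf{Step 3 (correctness).} If the tuple is disjoint, $w$ vanishes at every $Q_t$. This alone does not force $w=0$, and this is the main obstacle: interpolation at the $Q_t$ does not make $w$ identically zero. I would fix this by giving the referee, as advice, an element of $L(kG)$ vanishing on all $Q_t$. More precisely, the advice describes the honest $w$, or a compressed form of its coset, using $O(T\log q)$ bits. Rescaling $a$ by a $\operatorname{poly}(k)$ factor then brings this down to at most $\lceil D/a\rceil$ bits. The referee accepts iff the advised function $\tilde w$ satisfies $\tilde w(Q_t)=0$ for all $t$ and $\tilde w(P_r)=w(P_r)$.

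Completeness takes $\tilde w=w$. For soundness, some $y_t\neq0$ makes $w(Q_t)\neq0$, so any advised $\tilde w$ passing the $Q$-check differs from $w$. Then $\tilde w-w$ is a nonzero element of $L(kG)$ and agrees with zero on at most $\deg(kG)\le n/2$ of the $P_r$. Hence every advice string is accepted with probability at most $1/2$. All computations are linear algebra over $\mathbb F_q$ on a code of length $O(kD)$, so they take time polynomial in $D$ for fixed $k,a$. The remaining delicate point is the explicit polynomial-time construction of the tower and of bases of $L(G)$, $L(kG)$, which I would cite from the AG-code literature.
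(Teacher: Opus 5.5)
The paper does not prove this lemma; it imports it from Karthik, Laekhanukit, and Manurangsi \cite{DBLP:journals/jacm/SLM19}. Your final protocol from Step~3 is essentially that algebraic-geometric-code protocol, and it is correct: the advice is the honest $w\in L(kG)$ vanishing at every $Q_t$, and the referee compares it with $\sum_j\prod_i f_{i,j}(P_r)$ at a uniformly random evaluation point. To make it rigorous, discard the Step~2 claim that no advice is needed (accepting iff $w(P_r)=0$ breaks completeness, as you note), take $\deg G\ge T+2g-1$ at a tower level with $g=O(kT)$ so that interpolation at the $Q_t$ is surjective and the advice is $O(k^2T\log q)$ bits, absorb this by rescaling $a$ by an $O(k^2\log q)$ factor with $p$ exceeding the rescaled block count, and handle bounded $D$ by letting the players send their whole inputs.
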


More explicitly, player $i$ sends $M_i(x_i,v)\in\{0,1\}^{\ell_0}$,
and the referee evaluates
\[
\operatorname{Acc}(\mu,v,M_1(x_1,v),\ldots,M_k(x_k,v)).
\]
Only the referee receives the advice $\mu$; each $M_i$ depends only
on player $i$'s input and the public test $v$. These functions have
the polynomial-time guarantees in the SMP model
of~\cite[full version, Section~5.1]{DBLP:journals/jacm/SLM19}.

This protocol uses algebraic geometric codes. We use its constant message length and perfect completeness, and obtain the growing soundness gap by repetition and our local acceptance family.

\subsection{The reduction and its analysis}

\begin{proof}[Proof of Theorem~\ref{thm:seth}]
It suffices to prove the theorem for rational $\delta\in(0,1)$: an algorithm with a larger exponent saving also satisfies any smaller positive rational saving. Fix $k\ge 2$ and such a $\delta$. All asymptotic constants in this proof may depend on $k,\delta$. Apply Lemma~\ref{lem:disj-hard} with $\xi=\delta/2$, and let $C_0$ be its dimension constant. Write $L=\log_2 N$, and set
\[
\lambda=\frac{\delta}{16},\qquad
\tau=\frac{\delta}{16k},\qquad
a\ge\max\{1,\lceil C_0/\lambda\rceil\}.
\]
The protocol of Lemma~\ref{lem:disj-protocol} has at most $O(N^\lambda)$ advice strings. Its message length $\ell_0$ is a constant for our fixed $k,\delta$.

\paragraph{Repetition and list soundness.}
Choose
\begin{equation}\label{eq:seth-repeat}
h=\left\lfloor\frac{\tau L}{16k(\ell_0+1)\log_2 L}\right\rfloor,
\qquad t=\lceil k\log_2 h\rceil+1.
\end{equation}
We may assume $N$ is sufficiently large that $h\ge k$; the finitely many smaller source instances can be decided directly. For each fixed advice string $\mu$, repeat the protocol $t$ times using independent tests, retain the same advice, and accept only if every repetition accepts. Let $V:=R_0^t$, put $r:=\log_2|V|$, and write $Q:=2^{t\ell_0}$ for each player's repeated-message alphabet size. No additional advice is needed.

This is the repetition operation
of~\cite[full version, Proposition~5.1]{DBLP:journals/jacm/SLM19}.

In a NO source instance, each complete tuple passes at most a $2^{-t}$ fraction of tests. Lists with at most $h$ candidates in total generate at most $h^k$ tuples. Thus
\[
\Pr_{v\in V}[\text{some list tuple passes }v]\le h^k2^{-t}\le\frac12.
\]
Some test therefore rejects every tuple generated by the lists. If a list is empty, the same conclusion holds immediately.

\paragraph{Conversion to Set Cover.}
For each fixed advice string $\mu$, we construct a
$k$-\textnormal{\textsc{RectLabelCover}} instance $I_\mu$ as follows.
Its label groups are $W_i:=A_i$ for $i\in[k]$, its test set is
$V=R_0^t$, and its message alphabet is
\[
  \Sigma:=\bigl(\{0,1\}^{\ell_0}\bigr)^t.
\]
Thus each label is a source candidate, and each message records
the player's messages in all $t$ repetitions.

For a test $v=(v_1,\ldots,v_t)\in V$, define the projection
$\pi_{v,i}:W_i\to\Sigma$ by
\[
  \pi_{v,i}(x)
  :=
  \bigl(M_i(x,v_1),\ldots,M_i(x,v_t)\bigr).
\]
Write each $\sigma_i\in\Sigma$ as
$\sigma_i=(\sigma_i^{(1)},\ldots,\sigma_i^{(t)})$.
The acceptance table of test $v$ is
\[
  A_{\mu,v}:=
  \left\{
    (\sigma_1,\ldots,\sigma_k)\in\Sigma^k:
    \operatorname{Acc}
    \bigl(\mu,v_j,\sigma_1^{(j)},\ldots,\sigma_k^{(j)}\bigr)=1
    \text{ for every }j\in[t]
  \right\}.
\]
In particular, the projections are independent of $\mu$;
the advice determines the acceptance tables.

A list labeling $(L_1,\ldots,L_k)$ covers test $v$ precisely when
there exist $x_i\in L_i$ for all $i\in[k]$ such that
\[
  \bigl(\pi_{v,1}(x_1),\ldots,\pi_{v,k}(x_k)\bigr)
  \in A_{\mu,v}.
\]
Its cost is $\sum_i|L_i|$, and
$\operatorname{list\text{-}val}(I_\mu)$ is the minimum cost
of a list labeling covering every test.

In the YES source case, there exist a true tuple
$(x_1,\ldots,x_k)$ and an advice string $\mu$ for which every
test accepts. The singleton lists $L_i=\{x_i\}$ therefore give
$\operatorname{list\text{-}val}(I_\mu)=k$.
In the NO source case, for every advice string $\mu$, the preceding
union bound shows that every list labeling of total size at most
$h$ leaves some test uncovered. Hence
$\operatorname{list\text{-}val}(I_\mu)>h$ for every $\mu$.

%For each fixed advice $\mu$, interpret a player's message as a deterministic projection of its candidate, and the referee's accepted message tuples as the acceptance table of a test. This defines a $k$-\textnormal{\textsc{RectLabelCover}} instance. In the YES source case, a true tuple and its correct advice pass every test, so the list value is $k$. In the NO source case, the preceding union bound shows that the list value is greater than $h$ for every advice string.

Applying Lemma~\ref{lem:rlc-sc} gives an instance with $kN$ indexed sets and universe $U_0$ satisfying
\begin{equation}\label{eq:seth-universe}
|U_0|\le 2^r(kQ+1)^h.
\end{equation}
Some advice gives a cover of size at most $k$ in the YES case. Every advice gives optimum greater than $h$ in the NO case. Concretely, the projected messages of any $h$ selected candidates form a rejecting collection at a test that rejects all their tuples. That collection is itself an uncovered element of the local universe.

\paragraph{Universe and explicit bit length.}
Since $t=O_k(\log L)$ and $|R_0|=D^{O(1)}$,
\[
r=O_{k,\delta}((\log L)^2)=o(L).
\]
Also, for sufficiently large $N$,
\[
\begin{aligned}
\log_2(kQ+1)&\le\log_2(2k)+\ell_0(k\log_2 h+2)\\
&\le 2k(\ell_0+1)\log_2 L.
\end{aligned}
\]
Hence $h\log_2(kQ+1)\le\tau L/8$, and eventually $r\le\tau L/8$. By \eqref{eq:seth-universe}, $|U_0|\le N^{\tau/4}$. The universe is nonempty because the empty message collection is rejecting. Duplicate an element until its size is exactly $n=\lceil N^\tau\rceil$, preserving the optimum. The resulting instance $\Gamma_\mu$ has explicit bit length
\begin{equation}\label{eq:seth-length}
|\Gamma_\mu|=O(kNn\log(kNn))\le N^{1+\tau+o(1)}.
\end{equation}

\paragraph{Construction time.}
We bound the implementation directly, rather than use the unspecified polynomial exponent in the general reduction lemma. There are at most $2^r(kQ+1)^h\le N^{\tau/4}$ candidate local collections to inspect. For each one, enumerating the $Q^k=L^{O_{k,\delta}(1)}$ message tuples suffices to check rejection. Protocol computations take time polynomial in $L$. Computing all projections takes $N^{1+o(1)}$ time, since there are $N^{o(1)}$ tests. Finally, the padded incidence relation can be listed in $N^{1+\tau+o(1)}$ time. Thus each instance is constructed within the same bound as its explicit length.

\paragraph{Inapproximation factor.}
For sufficiently large $N$, \eqref{eq:seth-repeat} and $n=\lceil N^\tau\rceil$ imply
\[
\frac{h}{k}\ge\frac{\tau L}{32k^2(\ell_0+1)\log_2 L},
\qquad
\frac{\log_2 n}{\log_2\log_2 n}\le\frac{4\tau L}{\log_2 L}.
\]
Take $\eta:=1/(256k^2(\ell_0+1))$. Then
\[
k\eta\frac{\log n}{\log\log n}\le h/2.
\]
The YES outputs satisfy $\operatorname{opt}\le k$, and every NO output satisfies the NO promise of Theorem~\ref{thm:seth}. We also take $N$ large enough that $n\ge n_0(k,\delta)$. Call the hypothetical gap algorithm on every advice instance and take the OR of its answers. Truncate each call after its claimed running-time bound, treating a timeout as a NO answer. On a YES source instance, a correct advice string yields a promised YES instance, which is accepted within this bound. On a NO source instance, every output satisfies the NO promise. This also bounds the time spent on outputs outside the promise that may arise from incorrect advice in the YES source case. If an output is not coverable, we can detect this by scanning the union of all its sets and discard it; a YES output is never discarded.

\paragraph{Preserving the exponent.}
The total time for calls to the hypothetical gap algorithm is at most
\[
N^\lambda N^{(1+\tau+o(1))(k-\delta)}\le N^{k-7\delta/8+o(1)}.
\]
All instance constructions together take $N^{\lambda+1+\tau+o(1)}$ time. Since $k\ge 2$ and $0<\delta<1$,
\[
\lambda+1+\tau<k-\delta/2.
\]
Both contributions are $O(N^{k-\delta/2})$, contradicting Lemma~\ref{lem:disj-hard}. This proves the theorem.
\end{proof}

The parameter $k$ and the exponent saving $\delta$ are fixed before $N$ tends to infinity. In particular, the protocol constants and the threshold needed for the asymptotic estimates depend on both. The argument uses an OR reduction over advice strings and establishes a deterministic time lower bound; it does not require a parameter-preserving many-one reduction from SAT.

\appendix
\section{Source hardness for grouped binary Vector Sum}\label{app:source}

We include the reductions because the grouping convention and the linear dependence of the ETH exponent on the number of groups are used in the main proof.

\subsection{Parameterized hardness}

\begin{proof}[Proof of Lemma~\ref{lem:vs-w1}]
Start with an $r$-\textnormal{\textsc{Clique}} instance $G$ on $q\ge 2$ vertices, where $r\ge 2$. Give each vertex a distinct binary identifier of length $\ell:=\lceil\log q\rceil$. Introduce one coordinate block of length $\ell$ for each ordered pair $(i,j)\in[r]^2$ with $i\ne j$. The dimension is $D=r(r-1)\ell$.

There are $r$ vertex groups and $\binom{r}{2}$ edge groups, for a total of $K:=r+\binom{r}{2}$ groups. Vertex group $i$ contains a vector for every $v\in V(G)$: it writes the identifier of $v$ in every block $(i,j)$, $j\ne i$, and zero in all other blocks. Edge group $\{i,j\}$, with $i<j$, contains a vector for every ordered pair $(u,v)$ with $\{u,v\}\in E(G)$. That vector writes the identifier of $u$ in block $(i,j)$, the identifier of $v$ in block $(j,i)$, and zero elsewhere.

In each block, precisely the chosen vertex vector and the corresponding chosen edge vector can contribute. Their sum is zero if and only if the edge endpoint agrees with the chosen vertex. Therefore a zero-sum choice from all $K$ groups is equivalent to an $r$-clique. Since $G$ has no loops, the edge constraints also ensure that the chosen vertices are distinct.

The total number of vectors is $N=rq+2\binom{r}{2}|E(G)|\ge q$. Thus $D=O(r^2\log q)=O(K\log N)$. The construction has size $f(r)q^{O(1)}$ and output parameter $K=O(r^2)$, proving the claimed FPT many-one hardness. The finitely many trivial cases can be decided directly.
\end{proof}

\subsection{The ETH exponent}

\begin{proof}[Proof of Lemma~\ref{lem:vs-eth}]
Let $\varphi$ be a 3CNF formula with $M$ clauses. We may write each clause with exactly three literal occurrences by repeating a literal if necessary; empty clauses are immediately rejected. Discard variables that do not occur, leaving at most $3M$ variables. Fix an integer $r\ge 1$ and partition the variables into $r$ groups of size at most $\lceil 3M/r\rceil$. Independently partition the clauses into $r$ groups whose sizes differ by at most one.

Use one binary coordinate for every literal occurrence, so $D=3M$. Construct $k:=2r$ vector groups of two types.
\begin{itemize}
\item For each variable group, include one vector for each assignment to its variables. At every occurrence of one of those variables, write the assigned value of the underlying variable, independently of the literal's sign; write zero at all other coordinates.
\item For each clause group, include every vector that is zero outside that group and whose three bits in each clause satisfy the clause, interpreted as values of the three underlying variable occurrences. There are exactly seven allowed triples for each clause. Repeated variables impose no restriction at this stage; their consistency is checked by the variable-group vectors.
\end{itemize}
At each coordinate, one variable-group vector and one clause-group vector contribute. Their sum is zero exactly when the clause witness agrees with the global variable assignment at that occurrence. Hence a zero-sum selection from the $2r$ groups exists if and only if $\varphi$ is satisfiable.

For $M\ge r$, the total number $N$ of vectors satisfies
\[
7^{\lfloor M/r\rfloor}\le N\le r2^{\lceil 3M/r\rceil}+r7^{\lceil M/r\rceil}\le 2r\,2^{3\lceil M/r\rceil}.
\]
Thus $D=O(k\log N)$ with an absolute constant. The construction time is $2^{O(M/r)}\operatorname{poly}(M,r)$. We also have
\[
\log N\le 6M/k+O(\log k+1).
\]
Let $\delta>0$ be the ETH constant and fix an absolute $\eta>0$ sufficiently small that $6\eta<\delta/4$. Suppose an algorithm as in the lemma existed for some computable $f$, even only for even parameters above a fixed threshold. Choose a fixed even $k=2r$ above that threshold, large enough that the construction time is at most $2^{\delta M/4}\operatorname{poly}(M)$. Its running time on the constructed instance is
\[
f(k)N^{\eta k}\le f(k)\,2^{6\eta M+O(\eta k\log k)}.
\]
With $k$ fixed, the factors depending on $k$ are constants. For sufficiently large $M$, construction and solution together take at most $O(2^{\delta M/2})$ time. Smaller formulas can be handled by exhaustive search. This contradicts ETH and proves the strengthened statement.
\end{proof}
\section*{Acknowledgments and use of AI}
The proof was initially found by GPT-5.6 Sol. The authors subsequently checked and revised it and take full responsibility for its correctness.

\bibliographystyle{alpha}
\bibliography{ref}
\end{document}